\documentclass[sigconf,nonacm]{acmart}
\usepackage{enumitem}
\usepackage{booktabs,multirow,array,tabularx}
\usepackage{xcolor,colortbl}
\usepackage{tcolorbox}

\newtcolorbox{findingsbox}[1]{
    title={#1},
    fonttitle=\bfseries,
    fontupper=\small,
    colback=black!2,
    colbacktitle=black!7,
    coltitle=black,
    colframe=black!30,
    boxrule=0.4pt,
    arc=1pt,
    left=5pt,
    right=5pt,
    top=4pt,
    bottom=4pt,
    before skip=6pt,
    after skip=6pt
}
\AtBeginDocument{%
  }

\setcopyright{acmlicensed}
\copyrightyear{2027}
\acmYear{2027}
\acmISBN{978-1-4503-XXXX-X/2018/06}

\begin{document}

\title{The Vocabulary Is Already Here: Adapting Pretrained Interpretable Features for Generative Recommendation}

\author{Ziluowen Luo}
\affiliation{%
  \institution{Central South University}
  \city{Changsha}
  \state{Hunan}
  \country{China}
}
\email{lzlwddl@csu.edu.cn}

\author{Ruochen Liu}
\affiliation{%
  \institution{The Hong Kong University of Science and Technology (Guangzhou)}
  \city{Guangzhou}
  \country{China}
}
\email{rliu190@connect.hkust-gz.edu.cn}

\author{Jun Yin}
\affiliation{%
  \institution{Hong Kong Polytechnic University}
  \state{Hongkong}
  \country{China}
}
\email{Junmay.yin@connect.polyu.hk}

\author{Hao Yan}
\affiliation{%
  \institution{Central South University}
  \city{ChangSha}
  \state{Hunan}
  \country{China}
}
\email{csuyh1999@csu.edu.cn}

\author{Zhengxuan Hu}
\affiliation{%
  \institution{Queen Mary University of London}
  \state{London}
  \country{UK}
}
\email{jp2024213778@qmul.ac.uk}

\author{Xinze Shi}
\affiliation{%
  \institution{ Queen Mary University of London}
  \state{London}
  \country{UK}
}
\email{jp2024213114@qmul.ac.uk}

\author{Chaozhuo Li}
\affiliation{%
  \institution{Microsoft}
  \country{China}
}
\email{lichaozhuo1991@gmail.com}



\author{Chengqi Zhang}
\affiliation{%
  \institution{Hong Kong Polytechnic University}
  \state{Hongkong}
  \country{China}
}
\email{chengqi.zhang@polyu.edu.hk}

\author{Senzhang Wang}
\correspondingauthor
\affiliation{%
  \institution{Central South University}
  \state{Changsha}
  \country{China}
}
\email{szwang@csu.edu.cn}

\renewcommand{\shortauthors}{Trovato et al.}

\begin{abstract}
Generative recommendation (GR) follows the token generation paradigm of language models, predicting the next item through its semantic ID (SID). 
Existing works commonly fit SID vocabularies to pretrained language model (PLM) representations of item descriptions, incurring additional training costs and making vocabulary coverage dependent on the fitting domains. 
Alternatively, we observe that pretrained sparse autoencoders (P-SAEs) expose PLM features associated with human-interpretable semantic annotations, allowing feature indices to serve as a natural SID vocabulary. 
However, reusing this vocabulary poses two challenges. 
First, \textit{recommendation adaptation} requires preserving item distinctions under compact token budgets and learning to favor the target item over candidates with shared dominant features. 
Second, \textit{retrieval explanation} requires extending feature interpretability from item content to retrieval decisions. 
To address these challenges, we propose \textbf{\textit{API}}, a framework that \underline{A}dapts the \underline{P}retrained \underline{I}nterpretable vocabulary for effective and explainable GR without additional vocabulary learning. 
Within API, Pretrained Interpretable IDs (PIDs) provide compact item representations that retain activateed feature indices and relative activation strengths for item discrimination. 
With these PIDs as supervision, the Semantic Prediction Adapter (SPA) learns user preferences from interaction histories through joint feature-index and strength prediction.
SPA then supervises agreement with the target PID and encourages the observed next item to outrank its alternatives. 
Using this score for retrieval, API further explains candidate score differences through feature contributions paired with their pretrained semantic annotations. 
Extensive experiments show that API achieves competitive performance across backbones and decoding schemes, while preserving item distinctions under the compact token budget and enabling explainable retrieval.
\end{abstract}

\begin{CCSXML}
<ccs2012>
   <concept>
       <concept_id>10002951.10003317</concept_id>
       <concept_desc>Information systems~Information retrieval</concept_desc>
       <concept_significance>500</concept_significance>
       </concept>
   <concept>
       <concept_id>10010147.10010178.10010179</concept_id>
       <concept_desc>Computing methodologies~Natural language processing</concept_desc>
       <concept_significance>500</concept_significance>
       </concept>
 </ccs2012>
\end{CCSXML}

\ccsdesc[500]{Information systems~Information retrieval}
\ccsdesc[300]{Computing methodologies~Natural language processing}

\keywords{Generative recommendation, Interpretability, Vocabulary reuse}

\received{20 February 2007}
\received[revised]{12 March 2009}
\received[accepted]{5 June 2009}

\maketitle

\noindent \textbf{Relevance:} 
This work is relevant to the Track of \textit{User modeling, personalization and recommendation}. 
This work contributes to recommendation via adapting the pretrained vocabulary and provides a practical path for applying interpretability to deployment. 

\section{Introduction}\label{sec:intro}
\begin{figure}[tbp]
  \centering
  \includegraphics[width=\columnwidth]{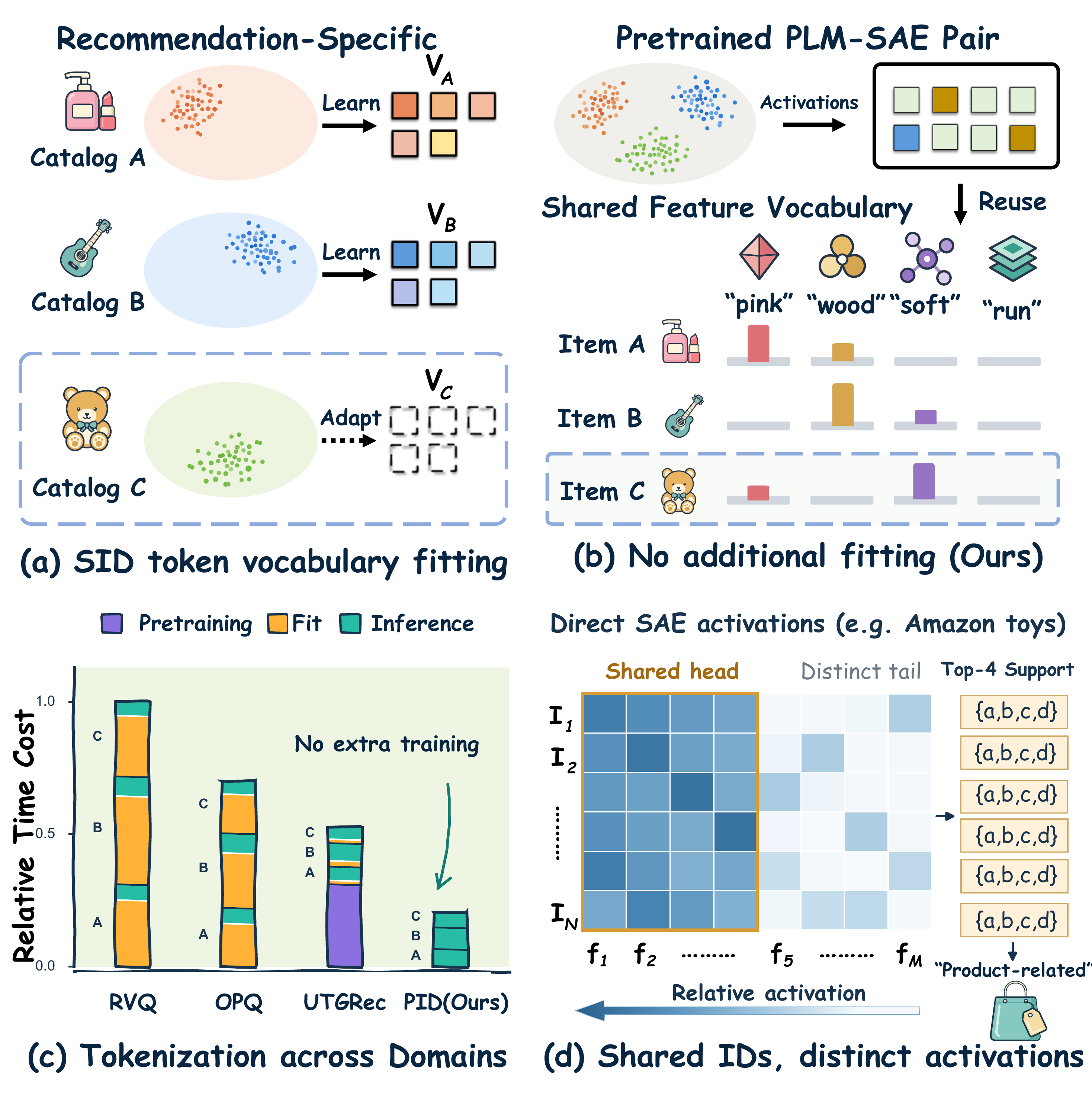}
\caption{The vocabulary is already here. 
(a--c) Pretrained SAEs enable vocabulary sharing without extra training.
(d) Distinct items can share the same dominant feature indices.}
  \label{fig:fig1}
  \vspace{-0.3cm}
\end{figure}

Generative recommendation systems (GRs) adopts the token generation paradigm of language models (LMs) to retrieve items conditioned on the user's interaction history~\cite{liu2026setllm,liu2024fine,luo2026otk}. 
A widely used approach represents items as the semantic IDs (SIDs), each consisting of a short sequence of discrete tokens. 
The mapping from items to SIDs, termed \textit{item tokenization}, affects the GR pipeline from two fundamental perspectives: \textit{item representation} and \textit{prediction supervision}~\cite{zhao2026simgr}. 
The resulting SIDs determine which item distinctions are retained and define the token-level predictive targets used to train the recommendation model. 
Therefore, a central problem in tokenization is how to construct compact SIDs that preserve item distinctions and provide effective supervision for recommendation. 

Existing approaches commonly construct SIDs by fitting a special discrete vocabulary over item representations produced by pretrained language models (PLMs)~\cite{rajput2023recommender,hou2025generating,zhu2024cost}. 
Inspired by the strong generalizability of unified natural language vocabularies in LMs , recent efforts have explored shared vocabularies for generalizable item tokenization~\cite{geng2022recommendation,xue2021mt5}. 
For example, UTGRec~\cite{zheng2026universal} learns a transferable tokenizer through multi-domain pretraining, while UniTok~\cite{hou2026tokenize} combines shared and specific experts for unified tokenization. 
Though these advances broaden tokenizer generalizability, they still construct vocabulary through additional fitting on the recommendation-specific data. 
The semantic coverage of the learned vocabulary therefore depends on the fitting data, while vocabulary construction incurs additional optimization costs.
Moreover, fitting a vocabulary to recommendation-specific data does not necessarily yield SID tokens with readily interpretable semantics~\cite{fang2025hidvae}.
These observations raise a natural question: \textit{Is there a free lunch for vocabulary construction in generative recommendation?}

On the basis of recent advances in \textit{mechanistic interpretability}, we answer this question in the affirmative. 
Specifically, pretrained sparse autoencoders (P-SAEs) already provide feature dictionaries for their paired PLMs, with many features admitting human-interpretable descriptions~\cite{gao2025scaling}. 
For each item, the P-SAE encodes its PLM representation into a sparse activation vector, whose nonzero coordinates identify the expressed features and whose values indicate their strengths. 
For example, a red T-shirt might activate a feature associated with "red" more strongly than one associated with "blue". 
As illustrated in Figure~\ref{fig:fig1}(b--c), the indices of these features can serve as SID tokens, providing a natural vocabulary that can be reused without additional recommendation-specific vocabulary learning. 
This vocabulary also carries inspectable semantics that P-SAE features often respond to recognizable semantic patterns in their inputs, allowing the corresponding tokens to indicate which properties are expressed in an item. 

However, bringing this vocabulary int effective and explainable GR still requires addressing two key challenges. 
First, \textit{\textbf{recommendation adaptation}} requires preserving item distinctions in both SID construction and the supervision built on these SIDs. 
As illustrated in Figure~\ref{fig:fig1}(d), similar items can share dominant features while differing in their activation strengths and less active features. 
Representing items only by their most active feature indices can discard these distinctions, while predicting shared indices alone does not ensure that the target item outranks similar candidates. 
Moreover, while these features make item semantics inspectable, unleashing the potential of the vocabulary requires extending this interpretability to the retrieval. 
This motivates the second challenge, \textit{\textbf{retrieval explanation}}, which is to trace feature contributions to items and explain why the model favors one item over another.

To address these challenges, we propose \textbf{\textit{API}}, a generative recommendation framework that \underline{A}dapting the \underline{P}retrained \underline{I}nterpretable vocabulary for effective and explainable generative recommendation. 
This adaptation spans two complementary levels: \textit{item representation} through \underline{P}retrained interpretable \underline{ID} (\textbf{PID}) construction, and \textit{recommendation learning} through a \underline{S}emantic \underline{P}rediction \underline{A}dapter (\textbf{SPA}). 
PID construction prioritizes strongly activated features while discounting those prevalent across the catalog, producing compact identifiers with accompanying activation evidence. 
Building on these representations, SPA incorporates feature identities and relative activation strengths into history modeling and aligns feature prediction of the target. 
To supervise these predictions, SPA combines sequence likelihood and strength agreement into retrieval and introduces preference alignment against competing items. 
API further traces feature-level contributions to retrieval, linking the semantics of each feature to the predicted evidence that favors a retrieved item. 
Our contributions are summarized as:
\vspace{-0.1cm}
\begin{itemize}
    \item We propose API, a generative recommendation framework that replaces recommendation-specific vocabulary learning with pretrained concept adaptation, opening a practical path for applying interpretability to practical deployment.

    \item We introduce PIDs that preserve item distinctions through feature selection and relative strengths, and theoretically characterize their separability under compact token budget.

    \item We develop SPA to effective and interpretable recommendation, aligning user preferences through item-level supervision and explain rankings through feature contributions.

    \item Experiments substantiate pretrained vocabulary reuse as a viable basis for GR, showing that API support competitive recommendation and enables explainable retrieval. Code is available at \url{https://anonymous.4open.science/r/API-C1F9/}.
\end{itemize}
\section{Related Works}\label{sec:related_work}
\noindent\textbf{Generative recommendation and item tokenization.}
Generative recommendation (GR) makes item identifiers part of the prediction space, placing item tokenization at the interface between item representation and model supervision. 
Language-based approaches include the text-to-text formulation~\cite{geng2022recommendation} and the learned textual identifiers~\cite{tan2024idgenrec}. 
For structured identifiers, TIGER constructs semantic IDs (SIDs) through residual quantization~\cite{lee2022autoregressive} of pretrained language model (PLM) representations~\cite{rajput2023recommender} while SEATER combines balanced tree indexing with contrastive learning~\cite{si2024generative}. 
LC-Rec then incorporate collaborative semantics into item indexing~\cite{zheng2024adapting,wang2024learnable} and ETEGRec couples tokenizer learning with recommendation through alternating optimization~\cite{liu2025generative}. 
UTGRec and UniTok further extend vocabulary learning across domains through shared tokenization~\cite{zheng2026universal,hou2026tokenize}. 
Despite these advances, the vocabulary learning still relies on recommendation-specific data which requires additional optimization and its semantic organization remains limited by the training domains. 
Our work extends pretrained knowledge reuse from continuous representations to the vocabulary units, adapting an existing PLM feature dictionary for recommendation.

\textbf{Sparse autoencoders and pretrained feature reuse.}
Motivated by the \textit{mechanistic interpretability}~\cite{bricken2023towards,huben2024sparse}, sparse autoencoders (SAEs) decompose language model activations into sparse combinations of dictionary features, exposing interpretable structure within dense representations~\cite{luo2026otk}. 
Subsequent work scales these dictionaries~\cite{templeton2024scaling,gao2025scaling}, improves reconstruction under sparsity constraints~\cite{rajamanoharan2024jumping} and releases pretrained SAEs for reuse through resources such as Gemma Scope~\cite{lieberum2024gemma}. 
In recommendation, SAERec uses SAE-derived intent priors to enrich sequence modeling~\cite{xia2026saerec}, but does not explore feature-level explanations of generative retrieval. 
In this work, we explores a practical route from mechanistic interpretability to recommendation systems, reusing a pretrained SAE feature dictionary as a shared vocabulary for GR. 
This further requires adapting reconstruction-oriented features for recommendation supervision and connecting feature-level readable concepts to item retrieval.
\section{Preliminaries}\label{sec:preliminary}

\textbf{Generative recommendation.}
Let $\mathcal I$ denote an item set, where each item $i\in\mathcal I$ is associated with a textual description $x_i$~\cite{luo2026distilling,luo2026nice}. 
Given a user’s sequential interaction history $\mathcal H_u=(i_1,\ldots,i_T)$, the task is to predict the subsequent item $i^+\in\mathcal I$. 
Generative recommendation (GR) approaches this task by generating a discrete token sequence conditioned on $\mathcal H_u$ and mapping the generating sequence to items. 
Autoregressive (AR) decoding generates tokens sequentially, conditioned on preceding tokens~\cite{rajput2023recommender}, whereas parallel (PR) decoding predicts multiple token in a single decoding step~\cite{hou2025generating}.

\textbf{Item tokenization.} 
For text-based item tokenization~\cite{geng2022recommendation}, a pretrained language model (PLM) (e.g., GPT-2~\cite{radford2019language}, T5~\cite{raffel2020exploring}) encodes the item description $x_i$ into continuous hidden representations $h_i$. 
The tokenizer then converts them into a semantic ID (SID), a compact sequence of discrete tokens drawn from a shared set $\mathcal V$, referred to as the \textit{vocabulary}. 
Formally, the tokenizer $\tau$ is defined:
\begin{equation}
\tau:\mathcal I\rightarrow\mathcal V^K,
\qquad
c_i=\tau(i)=(c_{i,1},\ldots,c_{i,K}),
\end{equation}
where $K$ is the length and $c_{i,r}\in\mathcal V$ denotes the token at position $r$. 
The vocabulary thus determine the information retained in these tokens and the token-level supervision used for recommendation.

\textbf{Pretrained sparse autoencoders.} 
A pretrained sparse autoencoders (P-SAE) decomposes continuous PLM representations into sparsely activated dictionary features \cite{gao2025scaling}. 
Given a hidden representation $\mathbf h$ from its paired PLM, the encoder $E$ produces a non-negative activation vector $\mathbf z=E(\mathbf h)$. 
The corresponding reconstruction is
\begin{equation}
    \hat{\mathbf h}
    =
    \mathbf D\mathbf z+\mathbf b_{\mathrm{dec}},
\end{equation}
where $\mathbf D=[\mathbf d_1,\ldots,\mathbf d_m]$ is the decoder dictionary and $\mathbf b_{\mathrm{dec}}$ is the decoder bias. 
Each column $\mathbf d_j$ represents a feature identified by its index $j$ and the activation $z_j$ indicates how strongly that feature is expressed in $\mathbf h$. 
The associated pretrained semantic annotation $m_j$ provides a human-readable description of the feature (e.g., "red"). 
These feature indices can thus serve as a reusable token vocabulary, accompanied by activation strengths and semantic annotations.

\section{Methodology}\label{sec:method}
\begin{figure*}[t]
    \centering
    \includegraphics[width=\linewidth]{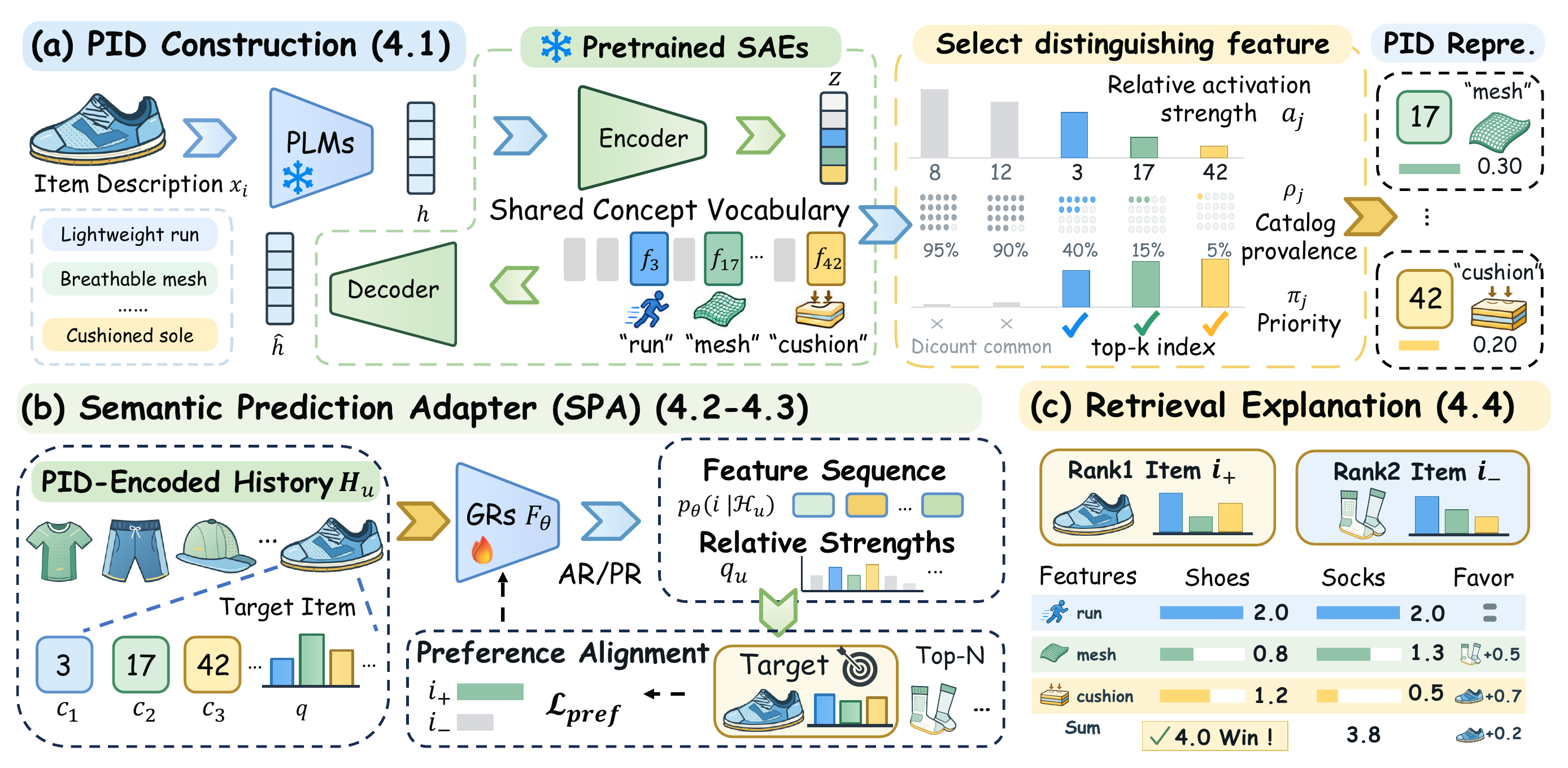}
\caption{Overview of API. 
(a) PID construction selects pretrained features and retains their
relative activation strengths and semantic annotations.
(b) SPA learns user preferences through feature-index and strength
prediction with item-level supervision.
(c) Candidate score differences are decomposed into semantic
feature contributions for retrieval explanation.}
    \label{fig:api-pipeline}
    \vspace{-0.3cm}
\end{figure*}

In this section, we propose API, a generative recommendation framework comprising Pretrained Interpretable ID (PID) construction and the Semantic Prediction Adapter (SPA). 
As illustrated in Figure~\ref{fig:api-pipeline}, PID construction selects pretrained features and retains their relative activation strengths to mitigate collisions(\S~\ref{sec:4.1}). 
Building on this representation, SPA then learns user preferences from interaction histories through PID prediction (\S~\ref{sec:spa-prediction}), with item-level supervision encouraging the GR model to favor the target item over other candidates (\S~\ref{sec:spa-competition}). 
We further extend explanation from item representations to retrieval decisions, revealing the  evidence behind the preference for one candidate over another (\S~\ref{sec:retrieval-explanation}).

\subsection{Pretrained Interpretable IDs Construction}
\label{sec:4.1}
To construct PIDs that preserve item distinctions, we first identify
the pretrained features expressed in each item and quantify their
strengths. 
For an item description sentence $x_i$, the frozen PLM--SAE pair\footnote{
We use the pretrained SAE released at \url{https://decoderesearch.github.io/SAELens/latest/pretrained_saes/}, together with the PLM and layer used for SAE pretraining.}
produces sparse activations $\mathbf{z}_i^{(t)}$ over token $t$. 
We then aggregate the activations into an item-level strength for
feature $j$: 
\begin{equation}
    a_{ij}
    =
    \frac{\|\mathbf{d}_j\|_2}{|\mathcal{T}_i|}
    \sum_{t\in\mathcal{T}_i} z_{i,j}^{(t)},
    \label{eq:pid-activation}
\end{equation}
where $\mathcal{T}_i$ denotes the token positions in $x_i$, excluding special token. 
The resulting $a_{ij}$ quantifies the item-level activation strength of feature $j$, indicating how strongly the item expresses that feature.

However, selecting feature indices solely by activation strength can assign the same token to distinct items.
Within a domain, shared dominant features can occupy the token budget, excluding less active features that distinguish individual items.
To allow these features to compete for inclusion, we form a candidate pool $\mathcal C_i$ containing the $M$ features with the highest positive activation strengths.
For each feature $j\in\mathcal C_i$, we define a \textit{selection priority} that favors strong activations while discounting features prevalent:
\begin{equation}
    \pi_{ij}
    =
    \log(1+a_{ij})\,\log\frac{1}{\rho_j}.
    \label{eq:pid-selection}
\end{equation}
where $\rho_j$ denotes the smoothed prevalence of feature $j$, estimated from its occurrence in candidate pools of training-visible items. 
Lower prevalence can compensate for weaker activation, allowing less active attributes to compete with shared dominant features. 
We then select the $K$ features with the highest priorities $\pi_{ij}$ and order their indices by decreasing activation strength $a_{ij}$. 
Let $c_{i,r}$ denote the feature index at position $r$. 
The resulting sequence $c_i=(c_{i,1},\ldots,c_{i,K})$ thus forms the token component of the PID. 

Although selection by priority mechanism mitigates collisions caused by shared dominant features, a compact token budget still leave semantically similar items with identical token sequences. 
We observe that the selected features can nevertheless exhibit different relative activation strengths across them.  
To preserve these differences, we retain the relative activation strengths in a vector $\mathbf{q}_i=(q_{i,1},\ldots,q_{i,K})$ aligned with the token sequence $c_i$. 
Each entry $q_{i,r}$ records normalized log strength of the feature indexed by $c_{i,r}$:
\begin{equation}
    q_{i,r}
    =
    \frac{\log(1+a_{i,c_{i,r}})}
    {\sum_{k=1}^{K}\log(1+a_{i,c_{i,k}})}.
    \label{eq:pid-profile}
\end{equation}
We retain the corresponding pretrained semantic concepts as $\mathbf m_i=(m_{c_{i,1}},\ldots,m_{c_{i,K}})$, associating each feature index with its human-readable meaning.
The feature-index sequence, relative activation strengths, and aligned semantic annotations together form the PID: 
\begin{equation}
    \mathrm{PID}_i
    =
    \bigl(c_i,\mathbf{q}_i,\mathbf{m}_i\bigr).
    \label{eq:pid-definition}
\end{equation}
By retaining relative activation strengths, PIDs can distinguish items that share the same feature-index sequence. 
The following proposition further bounds the number of such items.

\begin{proposition}[Item separability with PID]
\label{prop:pid-separation}
For a group of items sharing the same token sequence, let $\Delta$ denote the minimum pairwise $\ell_2$ distance between their mean PLM representations. 
If $\Delta>2\epsilon$, the maximum number $n_{\max}$ of items
sharing the same PID satisfies
\begin{equation}
    n_{\max}
    \leq
    1+\left\lfloor
    \frac{\delta}{\Delta-2\epsilon}
    \right\rfloor,
    \label{eq:pid-collision-bound}
\end{equation}
where $\epsilon$ bounds the reconstruction error using the selected features, and $\delta$ bounds the range of $\sum_{r=1}^{K}a_{i,c_{i,r}}$ across items in the group.
\end{proposition}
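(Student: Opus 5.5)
The plan is to turn the statement into a one-dimensional packing argument. Items with the same PID share both the index sequence $c_i$ and the strength vector $\mathbf q_i$. So within such a group the selected-feature activations $(a_{i,c_{i,1}},\ldots,a_{i,c_{i,K}})$ are all positive multiples of a common direction. Since $q_{i,r}$ is a normalization of $\log(1+a_{i,c_{i,r}})$, equal $\mathbf q_i$ does not literally force proportional $a$'s. I would therefore first interpret "sharing the same PID" as identical relative strengths of the selected features. Under that reading, $a_{i,c_{i,r}}=s_i\,u_r$ for a fixed unit-sum profile $\mathbf u$, with scalar $s_i=\sum_{r}a_{i,c_{i,r}}$. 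This reduction is the step I expect to be the main obstacle. It needs an explicit modeling assumption, or a local-linearization remark that $\log(1+a)$ normalization preserves ratios to first order. I would state it plainly rather than hide it.

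Next, each item's mean PLM representation is written through the selected features. Let $\hat{\mathbf h}_i=\sum_{r}\bar z_{i,c_{i,r}}\mathbf d_{c_{i,r}}+\mathbf b_{\mathrm{dec}}$, with $\bar z$ the token-averaged activation. By definition of $\epsilon$, $\|\bar{\mathbf h}_i-\hat{\mathbf h}_i\|_2\le\epsilon$. Since $a_{ij}=\|\mathbf d_j\|_2\bar z_{ij}$, we get $\hat{\mathbf h}_i-\mathbf b_{\mathrm{dec}}=\sum_r a_{i,c_{i,r}}\hat{\mathbf d}_{c_{i,r}}$ with unit vectors $\hat{\mathbf d}_j$. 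Within the group this equals $s_i\mathbf v$, where $\mathbf v=\sum_r u_r\hat{\mathbf d}_{c_r}$ has $\|\mathbf v\|_2\le\sum_r u_r=1$ by the triangle inequality. This is exactly why the norm weighting in Eq.~\eqref{eq:pid-activation} and the unnormalized sum $\sum_r a_{i,c_{i,r}}$ appear in the statement.

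Now take two items $i,i'$ in the group. The triangle inequality gives $\Delta\le\|\bar{\mathbf h}_i-\bar{\mathbf h}_{i'}\|_2\le 2\epsilon+|s_i-s_{i'}|\,\|\mathbf v\|_2\le 2\epsilon+|s_i-s_{i'}|$. Hence the scalars $s_i$ are pairwise separated by at least $\Delta-2\epsilon>0$. They all lie in an interval of length at most $\delta$. Sort them as $s_{(1)}<\cdots<s_{(n)}$; consecutive gaps are at least $\Delta-2\epsilon$, so $(n-1)(\Delta-2\epsilon)\le s_{(n)}-s_{(1)}\le\delta$. Since $n$ is an integer, $n\le 1+\lfloor\delta/(\Delta-2\epsilon)\rfloor$, which is Eq.~\eqref{eq:pid-collision-bound}. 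Maximizing over groups gives the bound on $n_{\max}$.

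The routine parts are the triangle inequalities and the packing count. The only delicate point is the first step, namely justifying that a shared PID pins down a common activation direction, so the remaining freedom is the single scalar $s_i$.
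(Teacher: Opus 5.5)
\textbf{There is a genuine gap in your first step.} Your packing argument and your use of the decoder-norm scaling in Eq.~\eqref{eq:pid-activation} match the paper's proof. The problem is earlier: you replace the actual hypothesis $\mathbf q_i=\mathbf q_{i'}$ with the stronger assumption that the raw activations $a_{i,c_r}$ are proportional. That assumption fails in general; for example, $\log(1+a')=2\log(1+a)$ with $a=(1,3)$ gives $a'=(3,15)$. Your fallback, a first-order linearization of $\log(1+a)$, would not give an exact inequality. As written, your argument proves the bound only for a modified notion of ``sharing the same PID,'' not for the PID defined by Eq.~\eqref{eq:pid-profile}.

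\textbf{The missing idea.} You do not need a common activation direction at all. You only need the difference vector $(a_{i,c_r}-a_{i',c_r})_r$ to have entries of a single sign. Equal $\mathbf q$ gives
\[
\log(1+a_{i,c_r})=\alpha\log(1+a_{i',c_r}) \quad \text{for all } r,
\]
with $\alpha=\sum_k\log(1+a_{i,c_k})/\sum_k\log(1+a_{i',c_k})>0$. Since the log-strengths are nonnegative and $\log(1+\cdot)$ is strictly increasing, $\alpha\ge 1$ forces $a_{i,c_r}\ge a_{i',c_r}$ for every $r$, and $\alpha\le 1$ forces the reverse. Hence, with unit vectors $\hat{\mathbf d}_{c_r}$,
\[
\Bigl\|\sum_r (a_{i,c_r}-a_{i',c_r})\hat{\mathbf d}_{c_r}\Bigr\|_2\le\sum_r\bigl|a_{i,c_r}-a_{i',c_r}\bigr|=\Bigl|\sum_r (a_{i,c_r}-a_{i',c_r})\Bigr|=|s_i-s_{i'}|.
\]
This is exactly the inequality your triangle-inequality step needs. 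With this substitution, the rest of your argument goes through verbatim: $\Delta\le 2\epsilon+|s_i-s_{i'}|$, then sorting the $s_i$, then the integer count. It then coincides with the paper's proof and needs no extra modeling assumption.
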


\noindent\textit{Proof.} See Appendix~\ref{app:pid-separation}. 
The bound relates collisions to reconstruction error and variation in summed strengths. 
When $\Delta>2\epsilon+\delta$,the strengths can distinguish items when their token sequence coincide.

\subsection{SPA: PID Prediction}
\label{sec:spa-prediction}

Although PIDs retain relative activation strengths alongside feature, predicting the indices alone does not explicitly model these strengths.
The Semantic Prediction Adapter (SPA) therefore incorporates feature-index and activation strengths into history encoding and predicts the next item's token sequence and relative strengths.

\textbf{Encoding history based PIDs.}
Though the PID construction selects $K$ features for compact item discrimination, the excluded may still provide evidence of user preferences. 
To retain this, SPA encodes each historical item using the candidate pool $\mathcal{C}_i$ from which $c_i$ is selected, together with the corresponding activation strengths. 
Specifically, we represent item $i$ by the feature--strength pairs $\mathcal{E}_i=\{(j,a_{ij}):j\in\mathcal{C}_i\}$. 
The feature encoder $\Psi_\theta$ combines trainable feature-index embeddings with activation-strength encodings and aggregates them into a fixed number of tokens per item. 
Given the interaction history $\mathcal{H}_u=(i_1,\ldots,i_T)$, a recommendation Transformer $F_\theta$ processes these tokens in interaction order:
\begin{equation}
    \mathbf{H}_u
    =
    F_\theta\!\left( 
        \Psi_\theta(\mathcal{E}_{i_1}),\ldots,
        \Psi_\theta(\mathcal{E}_{i_T})
    \right).
    \label{eq:spa-history}
\end{equation}

\textbf{Predicting target PIDs.}
Given the history representation $\mathbf H_u$, SPA models the next item's feature-index sequence through the conditional distribution $p_\theta(c\mid\mathbf H_u)$. 
SPA also predicts a strength vector $\hat{\mathbf q}_u$ from $\mathbf H_u$, where $\hat q_{u,r}$ estimates $q_{i^+,r}$ at activation-ranked position $r$. 
Its entries satisfy $\hat q_{u,r}\geq 0$ and $\sum_{r=1}^{K}\hat q_{u,r}=1$. 
The semantic annotations are obtained by looking up $m_j$ for each predicted feature index $j$. 
Its entries are normalized across the $K$ positions, satisfying $\sum_{r=1}^{K}\hat q_{u,r}=1$. 
In practice, the feature-index sequence can be generated autoregressively~\cite{geng2022recommendation} or in parallel~\cite{hou2025generating}. 
The remaining problem is how to favor the next item over competing catalog items.


\begin{table*}[t]
\centering
\caption{Overall recommendation performance. Best and second-best results are shown in \textbf{bold} and \underline{underlined}, respectively.}
\label{tab:main_results_wide}
\small
\setlength{\tabcolsep}{3pt}
\renewcommand{\arraystretch}{1.08}
\begin{tabular*}{\textwidth}{@{\extracolsep{\fill}}l*{12}{c}@{}}
\toprule
& \multicolumn{4}{c}{\textbf{Beauty}}
& \multicolumn{4}{c}{\textbf{Sports}}
& \multicolumn{4}{c}{\textbf{Toys}} \\
\cmidrule(lr){2-5}\cmidrule(lr){6-9}\cmidrule(l){10-13}
Method & \shortstack{R@5} & \shortstack{R@10} & \shortstack{N@5} & \shortstack{N@10} & \shortstack{R@5} & \shortstack{R@10} & \shortstack{N@5} & \shortstack{N@10} & \shortstack{R@5} & \shortstack{R@10} & \shortstack{N@5} & \shortstack{N@10} \\
\midrule
GRU4Rec
 & 0.0433 & 0.0646 & 0.0304 & \underline{0.0372}
  & 0.0222 & 0.0371 & 0.0141 & \underline{0.0189}
 & 0.0384 & 0.0543 & \underline{0.0272} & \underline{0.0324} \\
Bert4Rec
 & 0.0281 & 0.0434 & 0.0184 & 0.0233
 & 0.0129 & 0.0224 & 0.0079 & 0.0110
  & 0.0201 & 0.0313 & 0.0133 & 0.0168 \\
Tiger (2023)
 & \textbf{0.0464} & \underline{0.0662} & 0.0305 & 0.0369
 & \underline{0.0237} & 0.0341 & \underline{0.0156} & \underline{0.0189}
 & \underline{0.0402} & 0.0584 & 0.0263 & 0.0322 \\
LCRec (2024)
 & 0.0228 & 0.0362 & 0.0159 & 0.0197
 & 0.0082 & 0.0133 & 0.0073 & 0.0089
 & 0.0174 & 0.0285 & 0.0119 & 0.0157 \\
RPG (2025)
 & 0.0322 & 0.0502 & 0.0210 & 0.0269
 & 0.0176 & 0.0266 & 0.0124 & 0.0153
 & 0.0306 & 0.0477 & 0.0192 & 0.0247 \\
UTGRec (2026)
& 0.0193	& 0.0330 &	0.0122	& 0.0167	 & 0.0126 &	0.0219	& 0.0080 &	0.0109	& 0.0177 &	0.0304 &	0.0107	&0.0147 \\

Ghost (2026)
 & 0.0388 & 0.0590 & 0.0257 & 0.0322
 & 0.0260 & \textbf{0.0422} & 0.0171 & 0.0223
 & 0.0333 & 0.0543 & 0.021 & 0.0277 \\
DIGR (2026)
 & 0.0431 & 0.0605 & \textbf{0.0307} & 0.0362
 & 0.0186 & 0.0291 & 0.0129 & 0.0161
 & 0.0396 & 0.0571 & 0.0268 & 0.0319 \\
TDPM (2026)
 & 0.0418 & 0.0655 & 0.0279 & 0.0355
 & 0.0216 & 0.0321 & 0.0140 & 0.0173
 & 0.0398 & \underline{0.0591} & 0.0259 & 0.0322 \\
\midrule
\textbf{API (Ours)}
 & \underline{0.0457} & \textbf{0.0694} & \underline{0.0305} & \textbf{0.0389}
 & \textbf{0.0271} & \underline{0.0411} & \textbf{0.0188} & \textbf{0.0239}
 & \textbf{0.0414} & \textbf{0.0616} & \textbf{0.0291} & \textbf{0.0353} \\
\bottomrule
\end{tabular*}
\vspace{-0.1cm}
\end{table*}
\subsection{SPA: Item-Level Supervision}
\label{sec:spa-competition}

Building on the feature-index and strength predictions, we use the observed next item's PID to construct item-level supervision for GR training. 
We combine sequence likelihood and strength agreement into the score to supervise PID prediction and candidates ranking.

\textbf{Comparing candidate PIDs.}
For a item $i$, let $\ell_u(i)=\log p_\theta(c_i\mid\mathbf{H}_u)$ denote the log likelihood of its feature-index sequence. 
We then define the item score by balancing sequence likelihood $\ell_u(i)$ with agreement between the predicted and candidate relative strengths:
\begin{equation}
s_\theta(u,i)=\ell_u(i)-\beta D_{\mathrm{H}}(\hat{\mathbf{q}}_u,\mathbf{q}_i),
\label{eq:spa-item-score}
\end{equation}
where $\beta\geq 0$ controls the strength penalty. 
The squared Hellinger distance $D_{\mathrm{H}}$ compares the predicted strengths $\hat{\mathbf{q}}_u$ with the candidate's strengths $\mathbf{q}_i$ at corresponding activation-ranked positions:
\begin{equation}
D_{\mathrm{H}}(\hat{\mathbf{q}}_u,\mathbf{q}_i)=1-\sum_{r=1}^{K}\sqrt{\hat q_{u,r}q_{i,r}}.
\label{eq:spa-strength-distance}
\end{equation}
For these normalized vectors, the distance lies in $[0,1]$
and equals zero exactly when their relative strengths match.

\textbf{Learning user preferences.}
For the observed next item $i^+$, minimizing $-s_\theta(u,i^+)$ supervises its feature-index sequence and relative strengths. 
However, a better match to the target PID still leave competing items with higher scores. 
To encourage the target item to outrank competing candidates, we introduce a preference alignment loss over the candidate set $\mathcal{J}_u\subseteq\mathcal{I}$ containing $i^+$:
\begin{equation}
\mathcal{L}_{\mathrm{pref}}(u,i^+)=\log\sum_{n\in\mathcal{J}_u}\exp\big(s_\theta(u,n)-s_\theta(u,i^+)\big).
\label{eq:spa-catalog-loss}
\end{equation}
We then minimize the joint loss over observed interactions:
\begin{equation}
\mathcal{L}=\mathbb{E}_{(u,i^+)}\big[-s_\theta(u,i^+)+\lambda\mathcal{L}_{\mathrm{pref}}(u,i^+)\big],
\label{eq:spa-objective}
\end{equation}
where $\lambda\geq 0$ weights preference alignment. 
The following proposition shows how the relative strengths in PIDs can help SPA rank the target above the competing items with the same feature sequence.

\begin{proposition}[Item ranking with SPA]
\label{prop:spa-strength-margin}
For a target item $i^+$ and a competing item $i^-$ sharing the same feature-index sequence, let $\Delta_q=D_{\mathrm{H}}(\mathbf{q}_{i^+},\mathbf{q}_{i^-}) > 0$ denote their strength separation.
For $\beta>0$, the target's ranking margin satisfies
\begin{equation}
s_\theta(u,i^+)-s_\theta(u,i^-)\geq\beta\big(\Delta_q-2\sqrt{\Delta_q\varepsilon_u}\big).
\label{eq:spa-strength-margin}
\end{equation}
where $\hat{\mathbf{q}}_u$ denotes SPA's prediction of the next item's relative strengths given $\mathbf{H}_u$, and $\varepsilon_u=D_{\mathrm{H}}(\hat{\mathbf{q}}_u,\mathbf{q}_{i^+})$ measures its error against the target PID.
In particular, $\varepsilon_u<\Delta_q/4$ ensures that SPA ranks $i^+$ above $i^-$.
\end{proposition}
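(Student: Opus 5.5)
Since $i^+$ and $i^-$ share the same feature-index sequence $c_{i^+}=c_{i^-}$, the likelihood terms coincide: $\ell_u(i^+)=\log p_\theta(c_{i^+}\mid\mathbf H_u)=\ell_u(i^-)$. Subtracting the two scores in \eqref{eq:spa-item-score} therefore leaves only the strength penalties,
\begin{equation*}
s_\theta(u,i^+)-s_\theta(u,i^-)=\beta\big(D_{\mathrm H}(\hat{\mathbf q}_u,\mathbf q_{i^-})-D_{\mathrm H}(\hat{\mathbf q}_u,\mathbf q_{i^+})\big).
\end{equation*}
So the whole claim reduces to the lower bound $D_{\mathrm H}(\hat{\mathbf q}_u,\mathbf q_{i^-})-\varepsilon_u\ge \Delta_q-2\sqrt{\Delta_q\varepsilon_u}$, a purely geometric statement about three points of the probability simplex.

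For that inequality I would pass to the square-root embedding. Map each normalized vector $\mathbf p$ to $\sqrt{\mathbf p}=(\sqrt{p_1},\ldots,\sqrt{p_K})$, a unit vector in $\mathbb R^K$. Then $\|\sqrt{\mathbf p}-\sqrt{\mathbf p'}\|_2^2=2-2\sum_r\sqrt{p_rp'_r}=2D_{\mathrm H}(\mathbf p,\mathbf p')$. Hence $d(\mathbf p,\mathbf p')=\sqrt{D_{\mathrm H}(\mathbf p,\mathbf p')}$ is, up to the factor $\sqrt2$, a Euclidean distance and satisfies the triangle inequality. Writing $x=\sqrt{\Delta_q}$, $e=\sqrt{\varepsilon_u}$ and $y=\sqrt{D_{\mathrm H}(\hat{\mathbf q}_u,\mathbf q_{i^-})}$, the triangle inequality through $\hat{\mathbf q}_u$ gives $x\le y+e$, i.e.\ $y\ge x-e$.

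Now split into two cases.

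\textbf{Case $e\le x$.} Squaring is legitimate, so $y^2\ge x^2-2xe+e^2$, and hence $y^2-e^2\ge x^2-2xe=\Delta_q-2\sqrt{\Delta_q\varepsilon_u}$. This is exactly the required bound, and it gives \eqref{eq:spa-strength-margin} after multiplying by $\beta>0$.

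\textbf{Case $e>x$.} Here $\Delta_q-2\sqrt{\Delta_q\varepsilon_u}=x(x-2e)<0$. The true margin may also be negative in this case, so I would try to check that $y^2-e^2\ge x^2-2xe$ still holds. Using $y\ge e-x$ from the reverse triangle inequality gives $y^2\ge (e-x)^2$, hence $y^2-e^2\ge x^2-2xe$ again. So the same algebra covers both cases. More simply, $y\ge|x-e|$ always holds and $y^2\ge(x-e)^2$ regardless of sign, which removes the case split entirely. This sign/case handling is the only step I expect to need care; the rest is bookkeeping.

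Finally, for the ranking claim: if $\varepsilon_u<\Delta_q/4$, then $2\sqrt{\Delta_q\varepsilon_u}<2\sqrt{\Delta_q^2/4}=\Delta_q$. The right-hand side of \eqref{eq:spa-strength-margin} is therefore strictly positive because $\beta>0$ and $\Delta_q>0$. So $s_\theta(u,i^+)>s_\theta(u,i^-)$, and SPA ranks $i^+$ above $i^-$.
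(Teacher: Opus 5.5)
Your proof is correct and is essentially the paper's argument. Both pass to the square-root embedding on the unit sphere, and squaring your bound $y\ge|x-e|$ is algebraically identical to the paper's Cauchy--Schwarz step $(\sqrt{\hat{\mathbf q}_u}-\sqrt{\mathbf q_{i^+}})^\top(\sqrt{\mathbf q_{i^+}}-\sqrt{\mathbf q_{i^-}})\ge-\|\sqrt{\hat{\mathbf q}_u}-\sqrt{\mathbf q_{i^+}}\|_2\,\|\sqrt{\mathbf q_{i^+}}-\sqrt{\mathbf q_{i^-}}\|_2$; the paper's inner-product form just gets there without the two-sided triangle inequality that you correctly supplied.
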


\noindent\textit{Proof.} See Appendix~\ref{app:spa-strength-margin}.
Under the stated condition, the strength differences retained in PIDs yield a positive ranking margin for the target. 
At retrieval, SPA ranks items using the same score $s_\theta(u,i)$.

\subsection{Retrieval Explanation}
\label{sec:retrieval-explanation}

To explain why SPA favors one candidate over another, we decompose their margin into contributions from individual interpretable features. 
We then associate these contributions with their semantic annotations to explain the feature evidence behind the ranking. 

\textbf{Decomposing retrieval scores to pretrained features.} 
To quantify each feature's contribution to the item score, we decompose the sequence log likelihood as $\ell_u(i)=\sum_{r=1}^{K}\ell_{u,r}(i)$.
Here, $\ell_{u,r}(i)$ denotes the log probability assigned by SPA to feature index $c_{i,r}$ at position $r$. 
Combining this term with the strength agreement at the same position gives the feature contribution:
\begin{equation}
\phi_{u,i}(c_{i,r})=\ell_{u,r}(i)+\beta\sqrt{\hat q_{u,r}q_{i,r}},
\label{eq:retrieval-feature-contribution}
\end{equation}
where $\phi_{u,i}(c_{i,r})$ measures the contribution of feature $c_{i,r}$ to candidate $i$'s score.
For features $j$ absent from $c_i$, we set $\phi_{u,i}(j)=0$.

\textbf{Explaining candidate rankings.}
To explain the ranking of candidates $i$ and $i'$, we compare the contributions of each feature $j$ to their respective scores. 
Their score difference decomposes as:
\begin{equation}
s_\theta(u,i)-s_\theta(u,i')=\sum_j\left(\phi_{u,i}(j)-\phi_{u,i'}(j)\right).
\label{eq:retrieval-score-difference}
\end{equation}
Each term measures feature $j$'s contribution to this difference, with positive values favoring $i$ and negative values favoring $i'$.
We pair these contributions with the pretrained semantic annotations $m_j$ to show which features favor each candidate and by how much.
\section{Experiments}\label{sec:experiment}
\providecommand{\ExpPending}[1]{\par\noindent\textit{[To complete: #1]}\par}

\begin{table*}[t]
\centering
\caption{
Recommendation performance on three datasets with T5 and GPT-2 backbones.
$K$ counts semantic tokens.
R@$k$ and N@$k$ denote Recall@$k$ and NDCG@$k$.
The PID with the SPA constitutes API.
For each backbone and decoding strategy, \textbf{bold} and
\underline{underlined} values indicate the best and second-best results,
respectively, including ties.}
\label{tab:main_results}
\begingroup
\footnotesize
\setlength{\tabcolsep}{2pt}
\renewcommand{\arraystretch}{1.10}
\begin{tabularx}{\textwidth}{@{}ccc*{12}{>{\centering\arraybackslash}X}@{}}
\toprule
\multirow{2}{*}{\textbf{Back.}} &
\multirow{2}{*}{\textbf{Generation}} &
\multirow{2}{*}{\textbf{Tokenization}} &
\multicolumn{4}{c}{\textbf{Beauty}} &
\multicolumn{4}{c}{\textbf{Sports}} &
\multicolumn{4}{c}{\textbf{Toys}} \\
\cmidrule(lr){4-7}\cmidrule(lr){8-11}\cmidrule(lr){12-15}
 & & & R@5 & R@10 & N@5 & N@10 & R@5 & R@10 & N@5 & N@10 & R@5 & R@10 & N@5 & N@10 \\
\midrule
\multirow{16}{*}{\textbf{T5}} & \multirow{6}{*}{\textbf{AR}}
& \textbf{Random} & 0.0310 & 0.0454 & 0.0216 & 0.0262 & 0.0082 & 0.0111 & 0.0051 & 0.0060 & 0.0188 & 0.0250 & 0.0140 & 0.0161 \\
& & \textbf{RVQ} & \underline{0.0464} & 0.0662 & \underline{0.0305} & 0.0369 & \underline{0.0237} & 0.0341 & \underline{0.0156} & \underline{0.0189} & \underline{0.0402} & 0.0584 & 0.0263 & 0.0322 \\
& & \textbf{RQVAE} & 0.0402 & 0.0602 & 0.0269 & 0.0334 & 0.0215 & 0.0310 & 0.0139 & 0.0170 & 0.0344 & 0.0492 & 0.0240 & 0.0290 \\
& & \textbf{RKMeans} & 0.0441 & \underline{0.0667} & \underline{0.0305} & \underline{0.0378} & 0.0217 & 0.0330 & 0.0145 & 0.0181 & 0.0384 & 0.0564 & 0.0256 & 0.0314 \\
& & \textbf{HiD-VAE} & 0.0419 & 0.0644 & 0.0283 & 0.0355 & 0.0215 & 0.0328 & 0.0138 & 0.0175 & 0.0388 & \underline{0.0595} & 0.0257 & 0.0324 \\
& & \textbf{PID} & 0.0435 & 0.0630 & 0.0299 & 0.0361 & 0.0210 & \underline{0.0346} & 0.0135 & 0.0178 & 0.0401 & 0.0585 & \underline{0.0269} & \underline{0.0328} \\
\cmidrule(lr){2-15}
& \multirow{1}{*}{\textbf{+ SPA}}
& \textbf{PID} & \textbf{0.0476} & \textbf{0.0701} & \textbf{0.0323} & \textbf{0.0396} & \textbf{0.0243} & \textbf{0.0372} & \textbf{0.0157} & \textbf{0.0198} & \textbf{0.0419} & \textbf{0.0607} & \textbf{0.0287} & \textbf{0.0347} \\
\cmidrule(lr){2-15}
& \multirow{6}{*}{\textbf{PR}} & \textbf{OPQ ($K=4$)} & 0.0290 & 0.0466 & 0.0196 & 0.0256 & 0.0057 & 0.0105 & 0.0034 & 0.0049 & 0.0248 & 0.0373 & 0.0160 & 0.0200 \\
& & \textbf{OPQ ($K=8$)} & 0.0368 & 0.0543 & 0.0247 & 0.0303 & 0.0080 & 0.0134 & 0.0052 & 0.0070 & 0.0320 & 0.0485 & 0.0210 & 0.0263 \\
& & \textbf{OPQ ($K=16$)} & 0.0312 & 0.0484 & 0.0203 & 0.0259 & 0.0114 & 0.0199 & 0.0076 & 0.0102 & 0.0278 & 0.0424 & 0.0171 & 0.0218 \\
& & \textbf{PID ($K=4$)} & 0.0278 & 0.0402 & 0.0194 & 0.0234 & 0.0049 & 0.0087 & 0.0029 & 0.0041 & 0.0294 & 0.0488 & 0.0182 & 0.0245 \\
& & \textbf{PID ($K=8$)} & 0.0370 & 0.0541 & 0.0262 & 0.0316 & 0.0094 & 0.0158 & 0.0062 & 0.0083 & 0.0332 & \textbf{0.0547} & 0.0207 & 0.0276 \\
& & \textbf{PID ($K=16$)} & 0.0340 & 0.0523 & 0.0225 & 0.0284 & 0.0132 & 0.0194 & 0.0086 & 0.0106 & 0.0348 & 0.0509 & 0.0209 & 0.0262 \\
\cmidrule(lr){2-15}
& \multirow{3}{*}{\textbf{+ SPA}}
& \textbf{PID ($K=4$)} & 0.0388 & 0.0572 & 0.0264 & 0.0323 & \underline{0.0168} & \underline{0.0267} & \underline{0.0111} & \underline{0.0142} & \underline{0.0352} & \underline{0.0517} & \textbf{0.0242} & \textbf{0.0295} \\
& & \textbf{PID ($K=8$)} & \textbf{0.0401} & \textbf{0.0627} & \textbf{0.0274} & \textbf{0.0346} & \textbf{0.0192} & \textbf{0.0289} & \textbf{0.0126} & \textbf{0.0157} & \textbf{0.0354} & \underline{0.0517} & \underline{0.0229} & \underline{0.0282} \\
& & \textbf{PID ($K=16$)} & \underline{0.0394} & \underline{0.0575} & \underline{0.0271} & \underline{0.0329} & 0.0162 & 0.0241 & 0.0106 & 0.0131 & 0.0319 & 0.0462 & 0.0210 & 0.0256 \\
\midrule
\multirow{16}{*}{\textbf{GPT2}} & \multirow{6}{*}{\textbf{AR}}
& \textbf{Random} & 0.0292 & 0.0401 & 0.0207 & 0.0247 & 0.0070 & 0.0122 & 0.0039 & 0.0056 & 0.0133 & 0.0176 & 0.0094 & 0.0108 \\
& & \textbf{RVQ} & 0.0394 & 0.0594 & 0.0255 & 0.0320 & 0.0197 & 0.0331 & 0.0126 & 0.0169 & 0.0327 & 0.0536 & 0.0206 & 0.0274 \\
& & \textbf{RQVAE} & 0.0362 & 0.0542 & 0.0242 & 0.0301 & 0.0166 & 0.0271 & 0.0103 & 0.0137 & 0.0296 & 0.0439 & 0.0192 & 0.0238 \\
& & \textbf{RKMeans} & \underline{0.0423} & 0.0640 & 0.0278 & 0.0348 & 0.0182 & 0.0294 & 0.0110 & 0.0145 & 0.0313 & 0.0508 & 0.0195 & 0.0258 \\
& & \textbf{HiD-VAE} & 0.0398 & 0.0607 & 0.0261 & 0.0329 & \underline{0.0223} & \underline{0.0360} & \underline{0.0141} & \underline{0.0185} & 0.0335 & \underline{0.0567} & 0.0211 & 0.0286 \\
& & \textbf{PID} & 0.0420 & \underline{0.0663} & \underline{0.0282} & \underline{0.0361} & 0.0190 & 0.0292 & 0.0126 & 0.0159 & \underline{0.0380} & 0.0529 & \underline{0.0265} & \underline{0.0313} \\
\cmidrule(lr){2-15}
& \multirow{1}{*}{\textbf{+ SPA}}
& \textbf{PID} & \textbf{0.0457} & \textbf{0.0694} & \textbf{0.0305} & \textbf{0.0389} & \textbf{0.0271} & \textbf{0.0411} & \textbf{0.0188} & \textbf{0.0239} & \textbf{0.0414} & \textbf{0.0616} & \textbf{0.0291} & \textbf{0.0353} \\
\cmidrule(lr){2-15}
& \multirow{6}{*}{\textbf{PR}}
& \textbf{OPQ ($K=4$)} & 0.0292 & 0.0457 & 0.0197 & 0.0250 & 0.0100 & 0.0175 & 0.0065 & 0.0089 & 0.0267 & 0.0449 & 0.0164 & 0.0222 \\
& & \textbf{OPQ ($K=8$)} & 0.0374 & 0.0549 & 0.0248 & 0.0305 & 0.0161 & 0.0241 & 0.0109 & 0.0135 & 0.0274 & 0.0441 & 0.0165 & 0.0219 \\
& & \textbf{OPQ ($K=16$)} & 0.0322 & 0.0502 & 0.0210 & 0.0269 & 0.0176 & 0.0266 & \underline{0.0124} & 0.0153 & 0.0306 & 0.0477 & 0.0192 & 0.0247 \\
& & \textbf{PID ($K=4$)} & 0.0267 & 0.0393 & 0.0193 & 0.0234 & 0.0153 & 0.0234 & 0.0101 & 0.0127 & 0.0330 & 0.0483 & 0.0230 & 0.0279 \\
& & \textbf{PID ($K=8$)} & 0.0379 & 0.0540 & 0.0270 & 0.0322 & 0.0162 & \textbf{0.0309} & 0.0118 & \textbf{0.0167} & 0.0307 & 0.0447 & 0.0214 & 0.0259 \\
& & \textbf{PID ($K=16$)} & 0.0336 & 0.0525 & 0.0228 & 0.0288 & 0.0179 & 0.0281 & 0.0121 & 0.0156 & 0.0267 & 0.0383 & 0.0189 & 0.0226 \\
\cmidrule(lr){2-15}
& \multirow{3}{*}{\textbf{+ SPA}}
& \textbf{PID ($K=4$)} & \textbf{0.0411} & \textbf{0.0618} & \textbf{0.0277} & \textbf{0.0343} & \textbf{0.0195} & \underline{0.0300} & \textbf{0.0129} & \underline{0.0163} & \textbf{0.0393} & \textbf{0.0559} & \textbf{0.0281} & \textbf{0.0334} \\
& & \textbf{PID ($K=8$)} & 0.0395 & \underline{0.0573} & \underline{0.0274} & 0.0331 & \underline{0.0186} & 0.0294 & 0.0120 & 0.0154 & \underline{0.0384} & \underline{0.0531} & \underline{0.0271} & \underline{0.0318} \\
& & \textbf{PID ($K=16$)} & \underline{0.0398} & \underline{0.0573} & \textbf{0.0277} & \underline{0.0333} & 0.0161 & 0.0253 & 0.0104 & 0.0133 & 0.0377 & 0.0515 & 0.0262 & 0.0306 \\
\bottomrule
\end{tabularx}
\endgroup
\end{table*}

We evaluate pretrained vocabulary reuse and proposed API to generative recommendation through the following research questions:
\begin{itemize}[leftmargin=*]

\item  \textbf{RQ1:} How well does API recommend across decoding strategies? 
\item  \textbf{RQ2:} How well can PIDs preserve item distinctions and content? 
\item  \textbf{RQ3:} Which supervision signals improve item retrieval? 
\item  \textbf{RQ4:} How effectively can the pretrained vocabulary be reused? 
\item  \textbf{RQ5:} Can item retain readable meanings with the PID construction and GR model support targeted-feature steering to retrieval?
\end{itemize}

\vspace{-0.2cm}
\subsection{Experimental Setup} \label{sec:5.1}

\noindent\textbf{Benchmark.} 
We use the Beauty, Sports and Toys subsets of the preprocessed Amazon datasets aligned with prior work~\cite{geng2022recommendation,ju2025grid}, retaining their dataset splits and evaluation protocol. 
Item descriptions combine the title, brand, category and price fields. 
Recommendation models are trained on training interactions; PID selection statistics and activation calibration are estimated from training-item contents. 
Dataset statistics and details are provided in Appendix~\ref{app:experimental_details}.

\textbf{Compared methods.} 
We organize the main comparisons at two levels: item tokenization in Table~\ref{tab:main_results} and complete recommendation frameworks in Table~\ref{tab:main_results_wide}. 
The tokenization comparison includes Random, OPQ, HiD-VAE and residual quantization baselines (RVQ, RQ-VAE and RKMeans) that aligned with GRID~\cite{ju2025grid}. 
We then evaluate them with T5- and GPT-2-style backbones, adopting AR and PR implementations based on TIGER~\cite{rajput2023recommender} and RPG~\cite{hou2025generating}, respectively. 
Detailed baseline and configurations are provided in Appendix~\ref{app:baselines}. 

\textbf{Implementation.} 
We extract item features using pretrained GPT-2-small~\cite{radford2019language} and its paired P-SAE with $24,576$ labeled dictionary features, keeping both models frozen. 
The T5-style and GPT-2-style recommendation backbones are trained from random initialization. 
We retain 32 candidate features per item and use $K\in\{4,8,16\}$ tokens. 
We optimize the recommendation models using Adam with a learning rate of $10^{-3}$ and weight decay of $10^{-4}$. 
The loss and strength weights are set to $\lambda_{\mathrm{pref}}=0.1$ and $\beta=8$, respectively. 
Checkpoints are selected by Recall@5 on the validation split.
Detailed configurations are provided in Appendix~\ref{app:implementation}.

\textbf{Evaluation.} 
We report Recall@5/10 and NDCG@5/10 on the complete test split, using the full catalog without sampled negatives. 
For PID discrimination, \textit{collision rate} measures the fraction of items whose selected feature set is shared by another item. 
For content preservation, \textit{Overlap@20} measures the average fraction of shared neighbors between representation-based and reference top-20 lists. 
For steering, we report \textit{feature exposure@10}, the fraction of top-10 recommendations whose PIDs contain the target feature. 
PR uses exact catalog scoring, while AR uses catalog-constrained beam search with a width of 50.
Detailed are provided in Appendix~\ref{app:evaluation}.

\begin{findingsbox}{Findings for RQ1 (Recommendation Performance)}
\noindent\textbf{Finding 1: Effective vocabulary reuse.}
Pretrained features support competitive GR without additional vocabulary learning.

\noindent\textbf{Finding 2: Benefits of recommendation adaptation.}
SPA improves over direct PID in all evaluated AR settings and most PR settings.

\noindent\textbf{Finding 3: Competitive framework performance.}
API achieves the best reported results in nine of twelve dataset--metric comparisons.
\end{findingsbox}

\subsection{Recommendation Performance (RQ1)}\label{sec:5.2}

Table~\ref{tab:main_results_wide} compares the complete API framework with sequential and generative recommenders. 
\textbf{API achieves the best reported result in nine of the twelve dataset--metric comparisons}, including all four metrics on Toys.
It ranks second on Beauty Recall@5/NDCG@5 and Sports Recall@10, with a tie on Beauty NDCG@5.
Table~\ref{tab:main_results} further examines how recommendation adaptation affects the use of the pretrained vocabulary.
Direct PID does not uniformly outperform learned tokenizers: on Beauty with T5-PR and $K=4$, its Recall@5 is 0.0278, compared with 0.0290 for OPQ, while adding SPA raises it to 0.0388.
Across the two backbones and three datasets, \textbf{SPA improves Recall@5 over direct PID in all six AR settings and 16 of the 18 PR settings}.
With T5-AR, the complete API pipeline exceeds the strongest learned-tokenizer baseline by 2.5\%--4.2\% in Recall@5.
Both PR exceptions occur at $K=16$, while the best API-PR Recall@5 for every backbone--dataset pair is attained at $K=4$ or $K=8$.
These results also show that \textbf{increasing identifier length does not consistently improve recommendation}.

\begin{figure*}[t]
    \centering
    \includegraphics[width=1\linewidth]{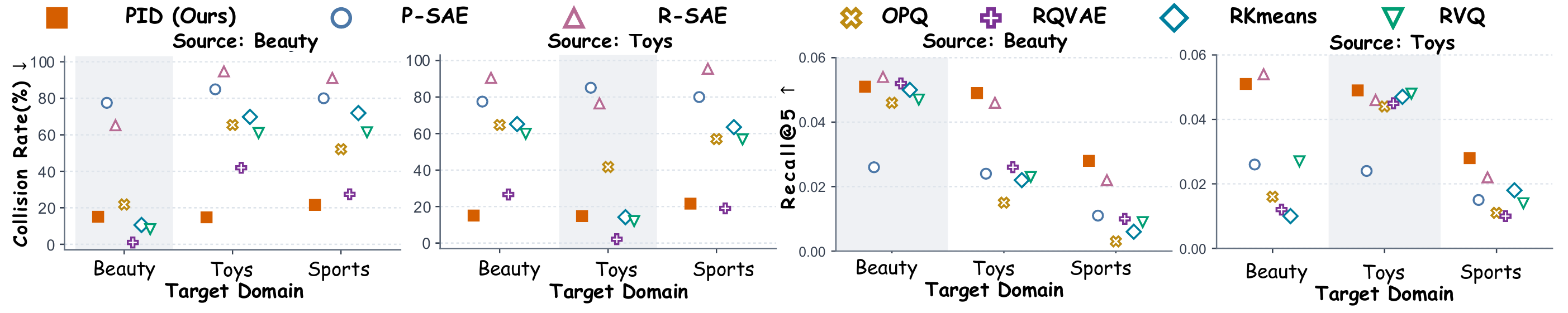}
\caption{Vocabulary reuse across domains. Collision rates and downstream
Recall@5 are reported for each source--target pair; shaded columns mark the
source domain of data-trained tokenizers.}
    \label{fig:vocabulary_reuse}
    \vspace{-0.3cm}
\end{figure*}

\subsection{PID Quality (RQ2)}\label{sec:5.3}

Table~\ref{tab:pid_collision} compares feature selection policies within the same top-32 candidate pool, measuring collisions between selected feature sets.
\textbf{PID improves item discrimination under compact budgets}: at $K=4$, it reduces the mean collision rate across the three datasets from 80.76\% under native selection to 17.09\%.
The policies coincide at $K=32$, when the entire candidate pool is retained.
Random selection yields even fewer collisions, but Figure~5\ref{fig:pid_content} shows substantially weaker content alignment at $K=4$ and $K=8$.
At these budgets, PID achieves the highest Overlap@20 among the compared methods on Beauty and Sports against both title-based and PLM-based references.
These comparisons indicate that \textbf{low collision rates alone do not establish content preservation}.
To further separate these properties, Table~\ref{tab:sae-profile-permutation} permutes complete feature--activation profiles across items while keeping item content fixed.
Collision rates remain unchanged, whereas Overlap@20 falls below 0.2\% against both references.
Finally, \textbf{activation-aware neighborhoods show stronger content alignment}: with the selected feature indices fixed at $K=8$, activation-weighted cosine neighborhoods exceed index-only Jaccard neighborhoods by 1.03\%, 2.60\% and 4.41 \% in PLM-based Overlap@20 on all benchmarks.

\begin{table}[t]
\centering
\small
\setlength{\tabcolsep}{2pt}
\renewcommand{\arraystretch}{0.95}
\setlength{\aboverulesep}{1pt}
\setlength{\belowrulesep}{1pt}
\captionsetup{skip=3pt}
\caption{Collision-affected items (\%) across token budgets.}
\label{tab:pid_collision}
\begin{tabular*}{\linewidth}{@{\extracolsep{\fill}}llrrrrr@{}}
\toprule
Dataset & Selection & $K=4$ & $K=8$ & $K=16$ & $K=24$ & $K=32$ \\
\midrule
\multirow{3}{*}{Beauty}
 & Native SAE & 77.37 & 42.72 & 11.11 & 2.86 & 1.23 \\
 & Random ID  &  2.50 &  0.00 &  0.00 & 0.00 & 1.23 \\
 & \textbf{PID} & 15.16 & 3.33 & 1.26 & 1.19 & 1.23 \\
\midrule[0.3pt]
\multirow{3}{*}{Sports}
 & Native SAE & 79.96 & 51.77 & 19.98 & 6.30 & 3.44 \\
 & Random ID  &  3.67 &  0.00 &  0.00 & 0.00 & 3.44 \\
 & \textbf{PID} & 21.48 & 7.71 & 3.43 & 3.20 & 3.44 \\
\midrule[0.3pt]
\multirow{3}{*}{Toys}
 & Native SAE & 84.95 & 57.59 & 17.96 & 4.68 & 1.49 \\
 & Random ID  &  2.59 &  0.00 &  0.00 & 0.00 & 1.49 \\
 & \textbf{PID} & 14.64 & 4.54 & 1.51 & 1.49 & 1.49 \\
\bottomrule
\end{tabular*}
\vspace{-0.3cm}
\end{table}

\begin{findingsbox}{Findings for RQ2 (PID Quality)}
\noindent\textbf{Finding 1: Compact discrimination.}
Priority-based feature selection reduces collisions within the same frozen vocabulary.

\noindent\textbf{Finding 2:} Discrimination do not guarantee content alignment.
\end{findingsbox}

\begin{figure*}[ht]
    \centering
    \includegraphics[width=1\linewidth]{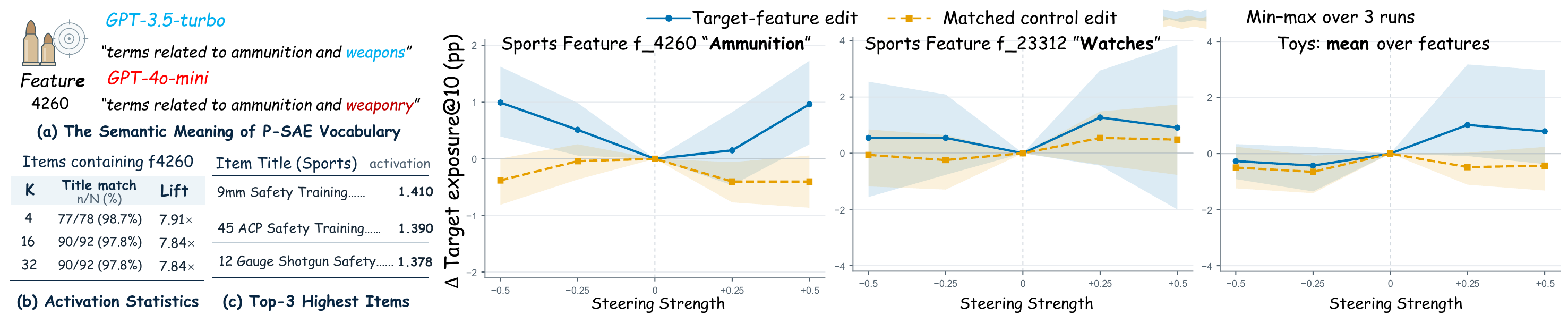}
\caption{\textbf{Interpretability and controlled steering with P-SAE features.}
Left: The labeled $f_{4260}$ by the LLMs, title matches to the meanings and top-activating items. 
$K$ is the PID budget; lift uses the catalog baseline ($\approx12.5\%$). 
Right: changes in target items exposure@10 under activation steering,
relative to the non-target. 
Shaded bands show the min--max range over three runs.}
    \label{fig:Interpretability}
    \vspace{-0.2cm}
\end{figure*}

\begin{figure}[t]
    \centering
    \includegraphics[width=1\columnwidth]{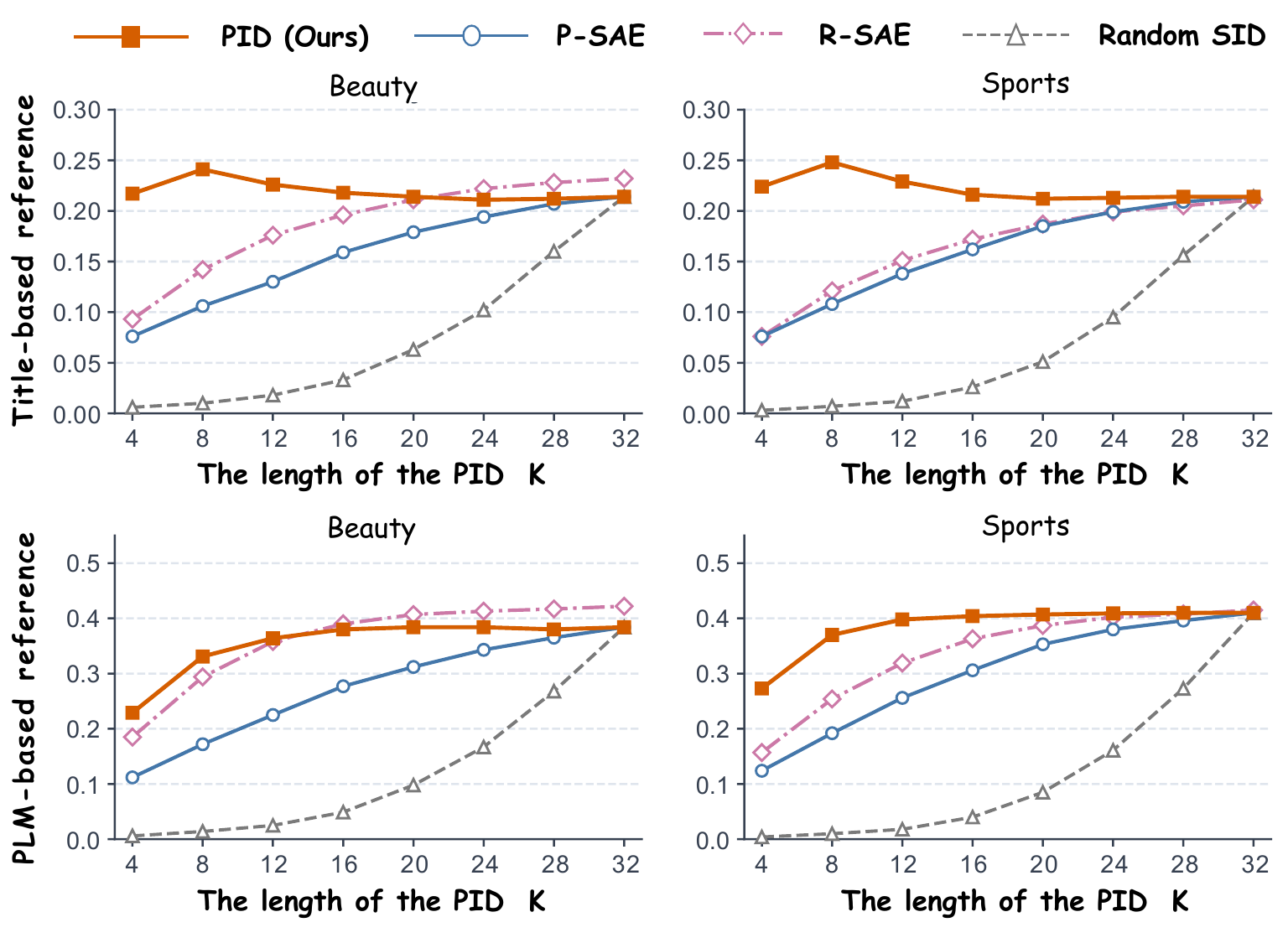}
    \vspace{-0.2cm}
\caption{Content alignment across semantic-token budgets, measured by nearest-neighbor overlap@20 against title-based and PLM-based references. 
Higher values indicate greater agreement with the reference neighborhoods.}
    \label{fig:pid_content}
    \vspace{-0.2cm}
\end{figure}

\subsection{SPA Supervision (RQ3)}
\label{sec:pid_quality}

Table~\ref{tab:spa_ablation} examines preference alignment and strength modeling under T5-PR with $K=8$, fixing the SPA architecture, activation-aware history encoding, feature-index targets, and output ordering.
Starting from SID prediction alone, we add preference alignment, strength prediction and scoring, or both.
\textbf{Preference alignment consistently improves retrieval}, increasing Recall@10 by 13.3\%, 45.9\%, and 9.4\% on Beauty, Sports, and Toys, respectively.
Adding strength prediction and scoring alone lowers NDCG@10 on all three datasets.
Once preference alignment is included, however, strength modeling further improves NDCG@10 by 4.0\%--6.0\%.
Full SPA achieves the highest Recall@10 and NDCG@10 throughout this ablation.
These results suggest that \textbf{strength evidence is more effective when coupled with item-level preference alignment}.

\begin{findingsbox}{Findings for RQ3}
\textbf{Finding 1: Learning through item competition.}
Catalog-grounded supervision provides the most consistent gains in this ablation.

\par\smallskip
\textbf{Finding 2: Complementary strength supervision.}
SPA provides additional ranking gains when combined with preference alignment. 
\end{findingsbox}


\begin{table}[t]
\centering
\small
\setlength{\tabcolsep}{1pt}
\renewcommand{\arraystretch}{1.05}
\caption{Content alignment under collision-preserving permutation
(\%). Collision rates (Coll.) remain unchanged.}
\label{tab:sae-profile-permutation}
\begin{tabular*}{\columnwidth}{
    @{\extracolsep{\fill}}l*{9}{r}@{}
}
\toprule
 & & \multicolumn{4}{c}{Index only}
   & \multicolumn{4}{c}{Index + Activation} \\
\cmidrule(lr){3-6}\cmidrule(lr){7-10}
 & & \multicolumn{2}{c}{Title}
   & \multicolumn{2}{c}{PLM}
   & \multicolumn{2}{c}{Title}
   & \multicolumn{2}{c}{PLM} \\
\cmidrule(lr){3-4}\cmidrule(lr){5-6}
\cmidrule(lr){7-8}\cmidrule(lr){9-10}
Dataset & Coll.
 & Orig. & Perm. & Orig. & Perm.
 & Orig. & Perm. & Orig. & Perm. \\
\midrule
Beauty & 3.33
 & 24.15 & 0.19 & 33.02 & 0.16
 & 25.54 & 0.17 & 34.05 & 0.14 \\
Sports & 7.71
 & 24.83 & 0.12 & 37.01 & 0.16
 & 26.21 & 0.11 & 39.61 & 0.15 \\
Toys & 4.54
 & 25.90 & 0.12 & 38.81 & 0.12
 & 27.47 & 0.15 & 43.22 & 0.14 \\
\bottomrule
\end{tabular*}
\vspace{-0.3cm}   
\end{table}

\vspace{-0.2cm}
\subsection{Vocabulary Reuse (RQ4)}
Figure~\ref{fig:vocabulary_reuse} evaluates learned tokenizers fitted on Beauty or Toys and frozen across all three target catalogs.
PID retains the frozen P-SAE dictionary and estimates IDF from target training-item contents; all recommenders are trained on target-domain interactions.
On Beauty and Toys, several learned tokenizers incur more collisions and lower Recall@5 when fitted on the other domain.
\textbf{PID reuses the vocabulary without refitting}, maintaining comparatively low collision rates and competitive Recall@5 across targets: it leads on Toys and Sports under both source configurations, while R-SAE remains ahead on Beauty.
Native top-$K$ selection produces substantially more collisions than PID using the same dictionary, highlighting \textbf{the importance of catalog-specific selection}.
Figure~\ref{fig:retrieval_explanation}(b) extrapolates single-task timings to three equal-workload tokenization tasks, counting UTGRec pretraining once and applicable fitting and processing per task.
Treating the pretrained PLM and P-SAE as available resources, cumulative time relative to one GPT-2 encoding task is $3.78\times$ for API, $407.69\times$ for RVQ, and $674.99\times$ for UTGRec.
\textbf{API thus avoids additional vocabulary-fitting costs}, while retaining per-task feature extraction and encoding.

\begin{table}[t]
\centering
\footnotesize
\setlength{\tabcolsep}{2pt}
\renewcommand{\arraystretch}{1.10}
\caption{
Supervision ablation of SPA under T5-style parallel prediction
with $K=8$ features.
}
\label{tab:spa_ablation}
\begin{tabular*}{\columnwidth}{
    @{\extracolsep{\fill}}lrrrrrr@{}
}
\toprule
&
\multicolumn{2}{c}{Beauty} &
\multicolumn{2}{c}{Sports} &
\multicolumn{2}{c}{Toys} \\
\cmidrule(lr){2-3}
\cmidrule(lr){4-5}
\cmidrule(lr){6-7}
Supervision
& R@10 & N@10
& R@10 & N@10
& R@10 & N@10 \\
\midrule

Index-only
& 0.05080 & 0.02959
& 0.01980 & 0.01104
& 0.04523 & 0.02518 \\

+ Strength
& 0.04905 & 0.02833
& 0.01893 & 0.01055
& 0.04533 & 0.02447 \\

+ Preference
& 0.05755 & 0.03312
& 0.02888 & 0.01509
& 0.04950 & 0.02656 \\

Full SPA
& \textbf{0.06265} & \textbf{0.03456}
& \textbf{0.02891} & \textbf{0.01569}
& \textbf{0.05172} & \textbf{0.02816} \\

\bottomrule
\end{tabular*}
\vspace{-0.2cm}
\end{table}

\begin{figure}
    \centering
    \includegraphics[width=1.0\linewidth]{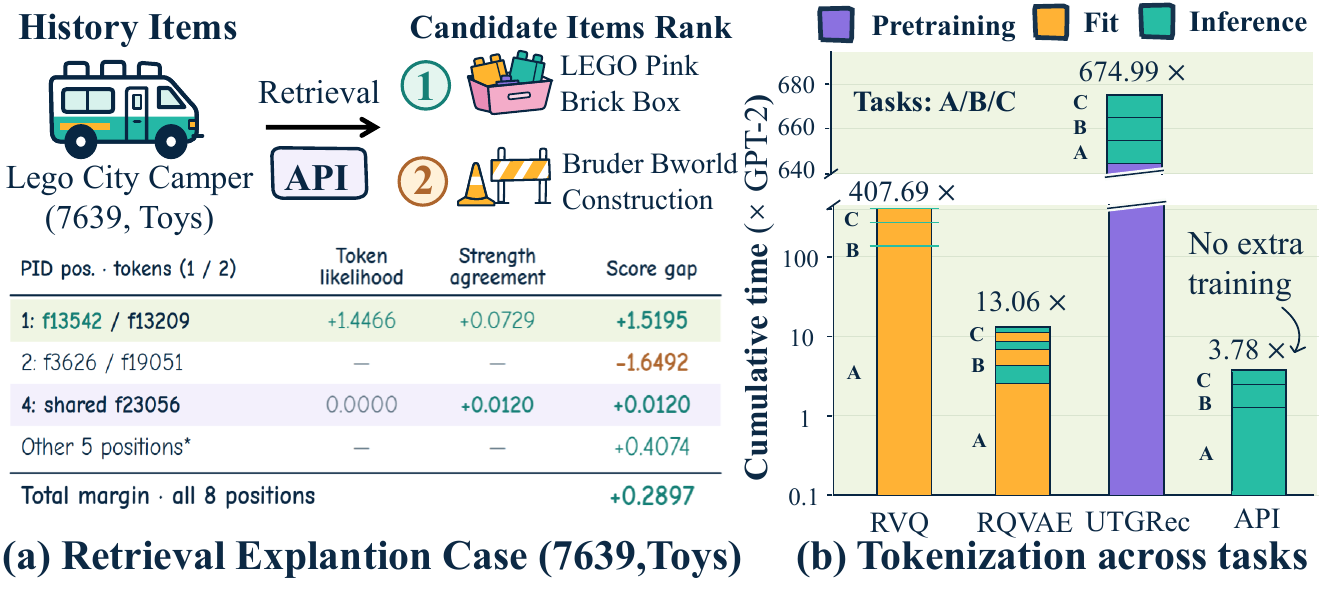}
\caption{\textbf{Retrieval explanation and tokenization efficiency.}
(a) A Toys case showing candidate score-gap contributions grouped
by PID position. 
(b) Projected umulative tokenization time for three
tasks, counting UTGRec pretraining once.
}
    \vspace{-0.3cm}
    \label{fig:retrieval_explanation}
\end{figure}

\vspace{-0.2cm}
\subsection{Retrieval Explanation (RQ5)}\label{sec:retrieval_explanation_experiment}

Figure~\ref{fig:Interpretability} illustrates \textbf{readable feature semantics} through Sports feature $f_{4260}$, whose external descriptions refer to ammunition and weapons. 
Among training items whose $K=4$ PIDs contain this feature, 77 of 78 match the hunting-and-shooting lexical rule, yielding 98.7\% agreement and a $7.91\times$ lift over the rule's prevalence in the training catalog. 
The highest-activation titles also describe training ammunition.
The Toys case in Figure~\ref{fig:retrieval_explanation} further provides \textbf{traceable ranking evidence}: a LEGO brick box outranks Bruder construction accessories by a margin of $+0.2897$, comprising $+0.2993$ from token likelihoods and $-0.0096$ from strength agreement.
At PID position~1, the contrast between LEGO-related $f_{13542}$ and $f_{13209}$ favors the LEGO item, whereas position~2 opposes it. 
Finally, we edit existing history activations while keeping model parameters and candidate PIDs fixed to probe \textbf{feature-dependent retrieval responses}. 
For $f_{4260}$, a $+0.50$ shift in standardized log-activation increases exposure@10 by 1.365 percentage points relative to matched control edits on 332 eligible test users (Appendix~C.2). 
\begin{findingsbox}{Findings for RQ4}
\textbf{Finding 1: Vocabulary sharing without refitting.}
Frozen pretrained dictionary supports competitive recommendation across benchmarks. 

\par\smallskip
\textbf{Finding 2: Adaptation within a shared vocabulary.}
PID reduces the collisions while preserving the pretrained feature-index identities.
\end{findingsbox}

\begin{findingsbox}{Findings for RQ5}
\noindent\textbf{Finding 1: Inspectable ranking evidence.}
Score decomposition links readable features to evidence supporting or opposing a candidate.

\noindent\textbf{Finding 2: Local steering effects.}
Activation edits can change exposure, with responses varying across features and doses.
\end{findingsbox}


%
%
%
%
%
%
\section{Conclusion}
We propose API to address catalog-dependent vocabulary learning in generative recommendation by adapting a pretrained interpretable dictionary into a shared vocabulary. 
Through PID construction and SPA, API connects compact item representations and user preference learning with interpretable retrieval evidence, opening a practical path for applying mechanistic interpretability to recommendation. 
Learning more general-purpose interpretable vocabularies and fully realizing their potential for recommendation remain promising directions for future research.

\textbf{LLM Usage.}
Assist with baseline reproduction, refining grammar, and reviewing mathematical proofs. The authors take full responsibility for the final manuscript and experimental results.

\bibliographystyle{ACM-Reference-Format}
\bibliography{WWW2027}

\appendix
\section{Experimental Details}
\label{app:experimental_details}

\subsection{Data and Recommendation Evaluation}
\label{app:exp_data}\label{app:evaluation}
We retain the supplied Amazon splits and item orderings, with one validation and one test target per user.
Validation uses up to 20 preceding training interactions; test uses up to 20 preceding interactions, including the validation item.
The TIGER sampler converts 32 user sequences into at most 512 contiguous next-item windows per micro-batch, each containing up to 29 history items and one target.
Evaluation uses unaugmented queries in batches of eight, including the final partial batch.

\begin{table}[!ht]
\centering\small
\setlength{\tabcolsep}{4pt}
\caption{Dataset statistics after pre-processing.}
\label{tab:exp_datasets}
\begin{tabular}{@{}lrrrr@{}}
\toprule
Dataset & Users & Items & Interactions & Train items\\
\midrule
Beauty & 22,363 & 12,101 & 198,502 & 12,068\\
Sports & 35,598 & 18,357 & 296,337 & 18,284\\
Toys & 19,412 & 11,924 & 167,597 & 11,876\\
\bottomrule
\end{tabular}
\end{table}

For target rank $r_u$, Recall@$k$ and NDCG@$k$ average $\mathbf1[r_u\le k]$ and $\mathbf1[r_u\le k]/\log_2(r_u+1)$ over users, respectively, for $k\in\{5,10\}$; unretrieved targets contribute zero.
Evaluation uses item identities, without sampled negatives or seen-item filtering.
PR scores the full catalog exactly; AR uses beam search over the catalog trie with width 50.
Checkpoints selected by full-validation Recall@5 are reloaded for full-test evaluation.

\subsection{Pretrained Resources and PID Construction}
\label{app:exp_sids}
We extract GPT-2-small~\cite{radford2019language} with Bloom's SAE through SAELens (\texttt{gpt2-small-res-jb}, hook \texttt{blocks.8.\allowbreak hook\_resid\_pre}).
Item text concatenates title, brand, category, and price, truncated to 128 tokens.
We pool token-level SAE activations excluding special and padding tokens, with decoder-norm scaling as in Eq.~\eqref{eq:pid-activation}.
Recommendation backbones are randomly initialized.

Each item contributes its top-256 positive features.
We estimate $\rho_j=(n_j+1)/(N_{\rm tr}+1)$ from training items, where $n_j$ counts candidate sets containing $j$ and $N_{\rm tr}$ counts training items; compact remapping preserves original feature identities.
SPA orders selected features by decreasing activation, breaking ties by index; direct PID-PR uses training-frequency ordering.
Ordering is fixed within SPA ablations.
A nonsemantic suffix disambiguates shared supports, yielding $\tilde c_i=(c_i,\nu_i)$; strength profiles and collision diagnostics exclude it.

History encoding standardizes $\log(1+a_{ij})$ using featurewise training means $\mu_j$ and standard deviations $\sigma_j$, smoothed with 16 global pseudo-observations and a standard-deviation floor of 0.05.
Values clipped to $[-2,2]$ interpolate eight learned gain/shift embeddings that modulate feature embeddings before GELU.
Four priority-selected anchors incorporate projected mean top-32 context and position embeddings; a fifth encodes the suffix.
Layer normalization yields five history tokens per item for every output $K$.
Output strengths normalize selected log activations without history calibration.

\subsection{Backbones, Training, and Baselines}
\label{app:implementation}\label{app:exp_optimization}\label{app:baselines}
\begin{table}[!ht]
\centering\small
\setlength{\tabcolsep}{3pt}
\caption{Recommendation backbones. E/D: encoder/decoder layers. GPT-2 AR and PR are separate capacity cohorts.}
\label{tab:exp_backbones}
\begin{tabular}{@{}lccc@{}}
\toprule
Setting & T5 AR/PR & GPT-2 PR & GPT-2 AR\\
\midrule
Layers (E/D) & 4/4 & 0/4 & 0/4\\
Width / heads & 128/6 & 448/8 & 128/8\\
K/V per head & 64 & 56 & 16\\
FFN widths & 1,024/1,024 & 1,792 & 1,024\\
Activation & ReLU & SiLU & GELU\\
Dropout & 0.15 & 0.15 & 0.15\\
\bottomrule
\end{tabular}
\end{table}
T5 uses relative positions and a three-linear-layer FFN; GPT-2 uses learned positions (512 for PR, history-plus-target capacity for AR). RPG-style PR baselines flatten SIDs, use position-specific vocabularies and tied input/output embeddings, and apply normalized-dot-product prediction heads at temperature 0.07. T5-SPA-PR uses a noncausal query decoder; GPT-2-SPA-PR appends learned role queries under causal attention. Both predict all roles in one pass without target-token inputs. AR uses shifted teacher forcing and candidate-specific prefixes. The history-only strength prediction is shared across candidates.

The standard recommendation recipe uses seed 45, the gradient accumulation over 16 microbatches, without learning-rate warmup or scheduling. 
Full validation occurs every 1,600 microbatches; early stopping allows ten unimproved checks, with limits of ten epochs and 320,000 optimizer steps. These limits concern recommendation training, not tokenizer fitting. Equal data protocols do not imply equal parameter counts or early-stopped exposure.

Full PR-SPA uses $(\lambda,\beta)=(1,8)$ in Eq.~\eqref{eq:spa-objective}. The four supervision settings are $(0,0)$, $(0,8)$, $(1,0)$, and $(1,8)$; all retain activation-aware histories and SID prediction. 
PR preference loss is positive-item cross-entropy after a temperature-one softmax over all catalog scores. 
The AR implementation instead uses the positive and four distinct uniform negatives, scoring each under its own prefix; this five-item objective is not exact full-catalog cross-entropy. 
T5-SPA-AR settings use $(\lambda,\beta)=(0.25,2)$ on Beauty and $(1,8)$ on Sports/Toys. At inference, AR normalizes over the full token vocabulary before masking illegal trie edges and applies strengths at complete leaves.

OPQ rotates and product-quantizes mean-pooled PLM residuals; the matched PR recipe uses $K$ subquantizers, 256 centroids each, 25 rotation/PQ iterations, and no additional PCA or whitening. Residual-tokenizer baselines preserve stage-specific code identities: RVQ learns residual codebooks, RKMeans applies residual clustering, and RQ-VAE learns an encoder/decoder around residual quantization. The verified GPT-2 residual-AR assets use four semantic indices plus one suffix. Their original-recipe tokenizers fit all catalog item contents (without held-out interaction labels), with seed 42 and batch size 2,048. RVQ uses per-stage residual normalization, Adagrad at $10^{-3}$, commitment weight 0.25, and 3,000 updates; RKMeans uses the same residual normalization, SGD at 0.5, and 30 updates. RQ-VAE uses batch/L2-normalized inputs, a $768$--$768$--$256$--$128$--$64$ encoder and mirrored decoder, no per-stage normalization, and the same Adagrad rate, commitment weight, and update budget as RVQ. Random recommendation codes are distinct from the random-positive-feature selection control used below. The source-transfer recipes in Appendix~\ref{app:reuse} form a separate diagnostic cohort, not additional replications of the main-table tokenizer fits.

\paragraph{Baselines.}
\textit{(1) Item tokenization.} 
Random assigns codes independently of item content.
OPQ~\cite{hou2025generating} jointly optimizes a rotation and subspace codebooks. 
RVQ learns successive residual codebooks, whereas R-KMeans clusters successive residuals; we follow their GRID implementations~\cite{ju2025grid}. 
RQ-VAE~\cite{lee2022autoregressive} learns an encoder and decoder around residual quantization, and HiD-VAE~\cite{fang2025hidvae} constructs hierarchical, disentangled semantic identifiers. 
Native P-SAE selects the strongest pretrained features, whereas R-SAE learns a sparse dictionary from item embeddings. 
The random-positive control samples from the active feature pool and differs from Random item codes. 
\textit{(2) Recommendation frameworks.} 
Our tokenizer comparisons use TIGER-style AR~\cite{rajput2023recommender}
and RPG-style PR~\cite{hou2025generating} with T5/GPT-2 backbones:
the former predicts SID tokens sequentially, while the latter predicts
the code positions in parallel.
The complete-framework comparison additionally includes
LC-Rec~\cite{zheng2024adapting}, which aligns collaborative semantics
with language-model representations;
UTGRec~\cite{zheng2026universal}, which learns a shared tokenizer
for transfer across recommendation domains;
Ghost~\cite{yin2026echoes}, which addresses popularity bias through
asymmetric unlikelihood optimization and skeleton-founded tokenization. 
DiGR~\cite{zhao2026simgr} instead uses SIDs as item representations
for direct item-level scoring.
GRU4Rec~\cite{liu2026setllm} and BERT4Rec~\cite{tan2024idgenrec}
provide conventional sequential-recommendation references using recurrent
and bidirectional masked-item modeling.

\section{Proofs and Score Derivatives}
\label{app:spa-proofs}

\subsection{Proof of Proposition~\ref{prop:pid-separation}}
\label{app:pid-separation}

\begin{proof}
Fix a group $\mathcal G$ of items sharing the token sequence
$c=(c_1,\ldots,c_K)$.
Let
$\bar{\mathbf h}_i
=|\mathcal T_i|^{-1}\sum_{t\in\mathcal T_i}\mathbf h_i^{(t)}$
denote the mean PLM representation over the positions used in
Eq.~\eqref{eq:pid-activation}.
We assume nonnegative activations with positive sums
$\omega_i=\sum_{r=1}^{K}a_{i,c_r}$, ensuring that
$\mathbf q_i$ is well-defined.

Write $\mathbf u_j=\mathbf d_j/\|\mathbf d_j\|_2$ for nonzero
decoder columns, and set $\mathbf u_j=\mathbf 0$ otherwise.
By Eq.~\eqref{eq:pid-activation}, the mean SAE reconstruction
using only the selected features is
\begin{equation}
    \tilde{\mathbf h}_i
    =
    \mathbf b_{\mathrm{dec}}
    +\sum_{r=1}^{K}\mathbf u_{c_r}a_{i,c_r}.
\end{equation}
The assumptions in the proposition give
$\|\bar{\mathbf h}_i-\tilde{\mathbf h}_i\|_2\leq\epsilon$
for every $i\in\mathcal G$, while the sums $\omega_i$ lie
in an interval of width at most $\delta$.

Consider $n_{\max}$ items sharing the same PID.
The case $n_{\max}=1$ is immediate.
For any two distinct items $i,i'$ in this set,
$\mathbf q_i=\mathbf q_{i'}$.
Equation~\eqref{eq:pid-profile} therefore implies that
\[
    \log(1+a_{i,c_r})
    =
    \alpha\log(1+a_{i',c_r})
    \qquad\text{for all }r
\]
for some $\alpha>0$.
Since the log strengths are nonnegative and the logarithm
is strictly increasing, all differences
$a_{i,c_r}-a_{i',c_r}$ have the same sign.
Using $\|\mathbf u_{c_r}\|_2\leq1$ and the triangle inequality,
we obtain
\begin{equation}
    \begin{aligned}
        \Delta
        &\leq
        \|\bar{\mathbf h}_i-\bar{\mathbf h}_{i'}\|_2\\
        &\leq
        2\epsilon+
        \|\tilde{\mathbf h}_i-\tilde{\mathbf h}_{i'}\|_2\\
        &\leq
        2\epsilon+
        \sum_{r=1}^{K}|a_{i,c_r}-a_{i',c_r}|\\
        &=
        2\epsilon+|\omega_i-\omega_{i'}|.
    \end{aligned}
\end{equation}
Thus, any two such items have activation sums separated
by at least $\Delta-2\epsilon>0$.

Order these sums as
$\omega_{(1)}\leq\cdots\leq\omega_{(n_{\max})}$.
Summing the lower bounds on adjacent gaps gives
\begin{equation}
    (n_{\max}-1)(\Delta-2\epsilon)
    \leq
    \omega_{(n_{\max})}-\omega_{(1)}
    \leq
    \delta.
\end{equation}
Rearranging and using the integrality of $n_{\max}$ yields
\[
    n_{\max}
    \leq
    1+\left\lfloor
    \frac{\delta}{\Delta-2\epsilon}
    \right\rfloor.
\]
In particular, $\delta<\Delta-2\epsilon$ forces
$n_{\max}=1$, so all items in $\mathcal G$ have distinct PIDs.
\end{proof}

\subsection{Proof of Proposition~\ref{prop:spa-strength-margin}}
\label{app:spa-strength-margin}

\begin{proof}
Fix a user $u$, a target item $i^+$, and a competing item $i^-$.
Let $\hat{\mathbf q}_u$ denote SPA's prediction of the next item's relative strengths given $\mathbf H_u$, and let $\mathbf q_{i^+}$ and $\mathbf q_{i^-}$ be the strength vectors recorded in their PIDs.

Write $\mathbf x=\sqrt{\hat{\mathbf q}_u}$, $\mathbf y=\sqrt{\mathbf q_{i^+}}$, and $\mathbf z=\sqrt{\mathbf q_{i^-}}$, with square roots taken elementwise.
Since the strength vectors are nonnegative and sum to one, $\|\mathbf x\|_2=\|\mathbf y\|_2=\|\mathbf z\|_2=1$.
The definitions of $\Delta_q$ and $\varepsilon_u$ therefore give
\begin{equation}
\Delta_q=1-\mathbf y^\top\mathbf z=\frac{1}{2}\|\mathbf y-\mathbf z\|_2^2,\qquad \varepsilon_u=\frac{1}{2}\|\mathbf x-\mathbf y\|_2^2.
\end{equation}

By Eq.~\eqref{eq:spa-strength-distance}, the difference between the two strength penalties is
\begin{equation}
D_{\mathrm H}(\hat{\mathbf q}_u,\mathbf q_{i^-})-D_{\mathrm H}(\hat{\mathbf q}_u,\mathbf q_{i^+})=\mathbf x^\top(\mathbf y-\mathbf z).
\end{equation}
Writing $\mathbf x=\mathbf y+(\mathbf x-\mathbf y)$ and using $\mathbf y^\top(\mathbf y-\mathbf z)=\Delta_q$, the Cauchy--Schwarz inequality yields
\begin{equation}
\mathbf x^\top(\mathbf y-\mathbf z)\geq\Delta_q-\|\mathbf x-\mathbf y\|_2\|\mathbf y-\mathbf z\|_2=\Delta_q-2\sqrt{\Delta_q\varepsilon_u}.
\end{equation}

Substituting this bound into Eq.~\eqref{eq:spa-item-score} gives
\begin{equation}
s_\theta(u,i^+)-s_\theta(u,i^-)\geq\ell_u(i^+)-\ell_u(i^-)+\beta\big(\Delta_q-2\sqrt{\Delta_q\varepsilon_u}\big).
\label{eq:spa-general-strength-margin}
\end{equation}
This inequality holds for general candidate pairs.
Under the proposition's assumption $c_{i^+}=c_{i^-}$, their sequence likelihoods are equal, so the bound reduces to Eq.~\eqref{eq:spa-strength-margin}.

Finally, $\varepsilon_u<\Delta_q/4$ implies $\Delta_q-2\sqrt{\Delta_q\varepsilon_u}>0$.
Together with $\beta>0$, this gives $s_\theta(u,i^+)>s_\theta(u,i^-)$, establishing the claimed ranking.
\end{proof}

\section{Diagnostic and Interpretation Protocols}
\subsection{Vocabulary Quality}
\label{app:representation-details}\label{app:reuse}
For a catalog of $N$ items and a suffix-free key $g(i)$, let $n_i$ count items sharing that key.
The collision rate is $N^{-1}\sum_i\mathbf{1}[n_i>1]$.
SAE keys are unordered positive-feature supports; quantizer keys retain stage identity.
Strengths are excluded, so this statistic measures support ambiguity rather than equality of complete PIDs.
Native, priority-based, and random-positive selection share a top-32 candidate pool; random selection samples without replacement.

Content alignment uses 2,000 fixed queries per domain (seed 45), the full catalog excluding each query, and a shared random tie order (seed 46).
For representation and reference top-20 neighborhoods $\mathcal N_i$ and $\mathcal N_i^{\rm ref}$,
\begin{equation}
\operatorname{Overlap}@20=\frac{1}{20|Q|}\sum_{i\in Q}|\mathcal N_i\cap\mathcal N_i^{\rm ref}|.
\end{equation}
References use cosine similarity of title TF--IDF (lowercase, English stop-word removal, unigrams/bigrams, sublinear TF, smoothed IDF) or mean-pooled PLM residuals.
Index-only neighborhoods use support Jaccard; activation-aware neighborhoods use cosine similarity of sparse, decoder-rescaled raw activations, distinct from SPA's normalized log strengths.
Permutation moves complete feature--activation profiles across items with content fixed, preserving collision multiplicities.

Reuse diagnostics freeze source-fitted tokenizers across targets, using four semantic tokens and the same 10,000-item subset per target for collision evaluation.
R-SAE uses 4,096 features and 3,000 updates; RQ-VAE uses 3,000 updates.
RVQ/R-KMeans use globally RMS-scaled residuals without per-stage normalization, taking the first four stages of 32-stage encoders trained for 100 updates per stage.
OPQ fits source rotation/codebooks.
PID retains the frozen P-SAE dictionary and estimates IDF from target training-item contents.
Comparisons match token counts, not bit rates; downstream recommenders are trained on target interactions.

\subsection{Semantic Annotations and Steering}\label{app:steering}
Neuronpedia supplies external, automatically generated feature descriptions, unused as model inputs.
Title agreement is the lexical-match rate among feature-containing training items; lift divides it by the training-catalog match rate.
For $f_{4260}$, the case-insensitive whole-word rule matches hunt, hunting, rifle, rifles, gun, guns, pistol, shotgun, holster, ammo, ammunition, or scope.

Steering freezes T5-PR Full SPA ($K=8$) and edits six prespecified features per domain at doses $d\in\{0,\pm0.25,\pm0.50\}$.
Existing positive history activations are shifted by $d$ in unclipped standardized log coordinates, bounded below by the positive-activation floor, clipped to $[-2,2]$, and re-encoded; supports, anchors, suffixes, and candidate PIDs stay fixed.
Eligible users have the target feature in at least two history items and an anchor responsive in both directions.
Within this cohort, three deterministic control features match history frequency, anchors, and activation statistics; per-query scaling matches post-clipping perturbation norms.

Exposure@10 measures the fraction of retrieved items whose PID contains the target feature.
Changes from unedited histories are reported in percentage points; Toys equally averages six feature-specific effects.
Doses are validation-selected to retain at least 98\% Recall@10 on both the full split and matched cohort.
Curve bands show min--max over three runs.
Paired target-minus-control effects average controls within users; nominal 95\% percentile intervals use 500 paired user-bootstrap resamples without correction across 36 feature--direction comparisons.
For Sports $f_{4260}$ at $d=0.50$, the effect is $+1.365$ points on 332 test users, with interval $[0.652,2.189]$.

\section{Additional Results}\label{app:res}
\suppressfloats[t]

\textbf{Vocabulary reuse.}
Figure~\ref{fig:app-sports-reuse} extends the source-domain comparison to Sports.
Across the three target catalogs, PID has substantially fewer collisions than native P-SAE while using the same frozen dictionary.
RQ-VAE attains lower collision rates in this comparison, while PID supports item distinction without extra vocabulary fitting.

\begin{figure}[!htbp]
    \centering
    \includegraphics[width=\linewidth]{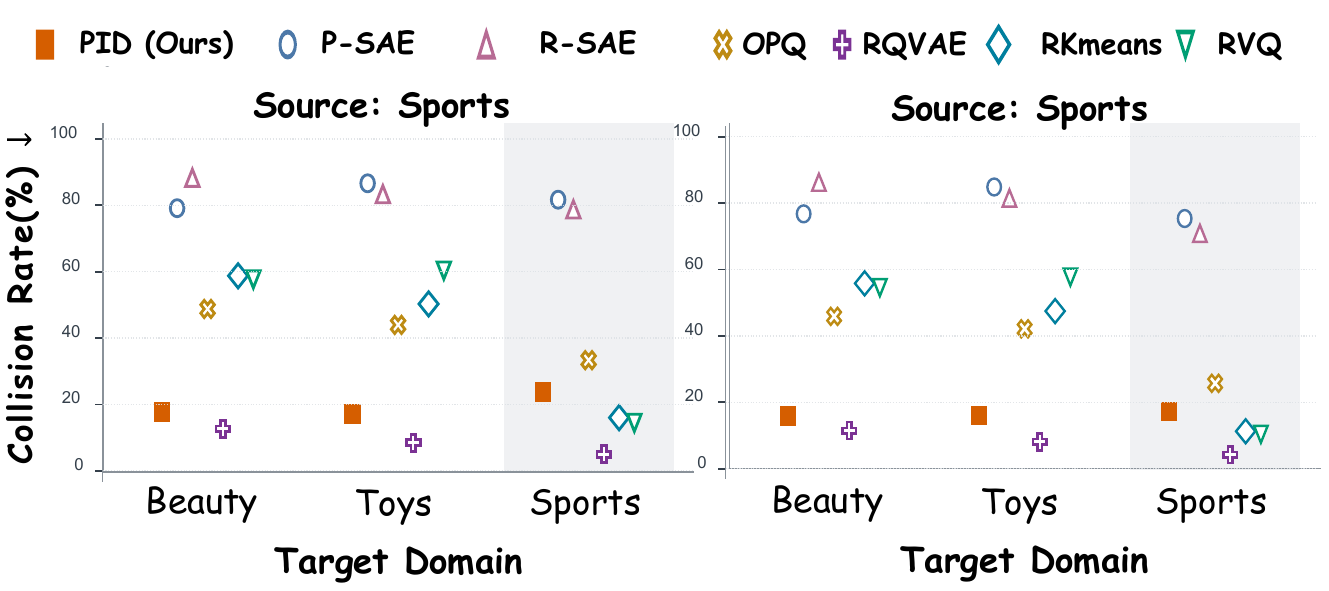}
    \caption{Vocabulary reuse from Sports. Collision-affected items (\%) are reported; shaded columns mark the source domain used to fit learned tokenizers. PID reuses the frozen P-SAE dictionary with target-catalog IDF.}
    \label{fig:app-sports-reuse}
\end{figure}

\textbf{Collisions under compact budgets.}
Figure~\ref{fig:app-budget-collisions} compares feature selection and quantization at $K\in\{4,8,16\}$.
At $K=4$, PID reduces the collision-affected rate from roughly 80\% under native selection to 14.64\%--21.48\% across datasets; the gap narrows as the budget increases.
Random-positive selection and RQ-VAE often yield still fewer collisions.
Together with the content-alignment results in the main text, this comparison highlights why low collision rates alone are insufficient to assess semantic item representations.

\begin{figure}[!htbp]
    \centering
    \includegraphics[width=\linewidth]{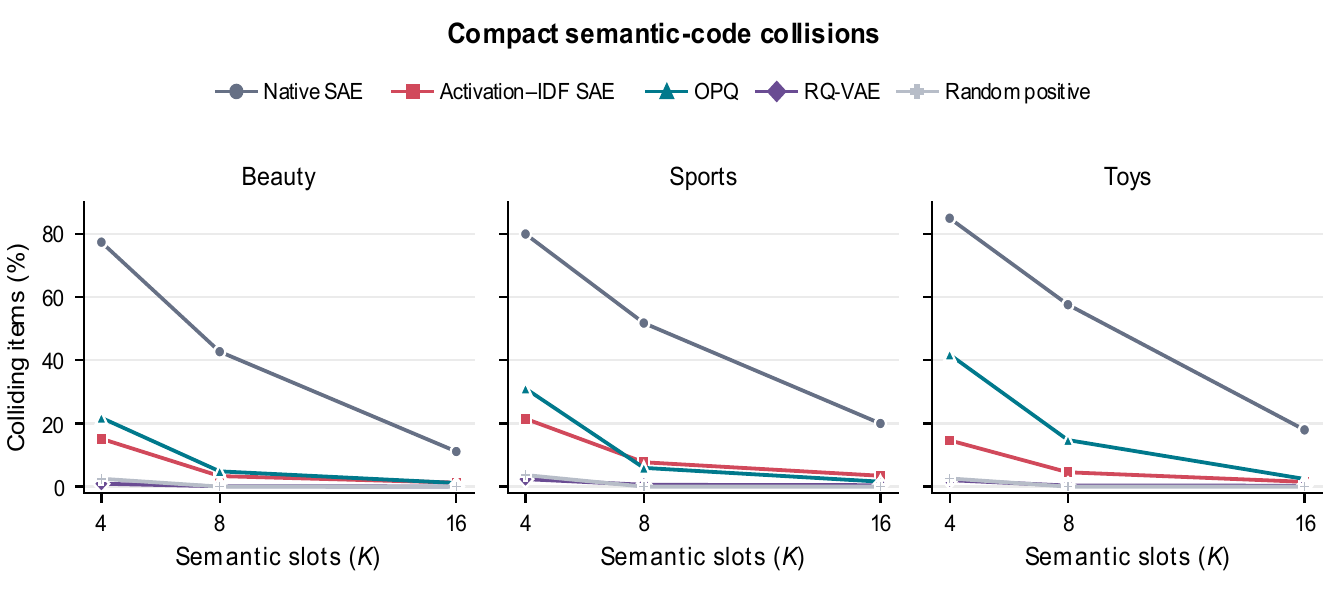}
    \caption{Collision-affected items (\%) across semantic-token budgets on Beauty, Sports, and Toys. Activation--IDF SAE denotes PID; random-positive selection samples active SAE features. Strengths and disambiguation suffixes are excluded from collision statistics.}
    \label{fig:app-budget-collisions}
\end{figure}

\textbf{Additional semantic examples.}
The Beauty feature $f_{14730}$ and Toys feature $f_{13542}$ further connect pretrained descriptions with item content.
At $K=4$, title agreement reaches 92.4\% for brushes and 99.6\% for LEGO, with lifts of $18.31\times$ and $13.77\times$, respectively.
Their top-activating titles exhibit the same patterns, providing complementary evidence for interpreting the selected features.

\end{document}